\newtheorem{theorem}{Theorem}
\newtheorem{lemma}{Lemma} 
\newtheorem{corollary}{Corollary}
\newtheorem{definition}{Definition}
\def \bE {\mathbb{E}}
\def \bR {\mathbb{R}}
\def \calX {\mathcal{X}}
\DeclareMathOperator*{\argmax}{arg\,max}
\DeclareMathOperator*{\esssup}{ess\,sup}
\title{Dependence Measures Bounding the Exploration Bias for General Measurements}
\author{

\IEEEauthorblockN{Jiantao Jiao}
\IEEEauthorblockA{
EE Department, Stanford University\\
jiantao@stanford.edu}

\and

\IEEEauthorblockN{Yanjun Han}
\IEEEauthorblockA{EE Department, Stanford University\\
yjhan@stanford.edu
}

\and
\IEEEauthorblockN{Tsachy Weissman}
\IEEEauthorblockA{EE Department, Stanford University\\
tsachy@stanford.edu}
\thanks{This work was supported in part by the NSF Center for Science of Information under grant agreement CCF-0939370.}
}
\begin{document}
\maketitle

\begin{abstract}
We propose a framework to analyze and quantify the bias in adaptive data analysis. It generalizes that proposed by Russo and Zou'15, applying to measurements whose moment generating function exists, measurements with a finite $p$-norm, and measurements in general Orlicz spaces. We introduce a new class of dependence measures which retain key properties of mutual information while more effectively quantifying the exploration bias for heavy tailed distributions. We provide examples of cases where our bounds are nearly tight in situations where the original framework of Russo and Zou'15 does not apply. 
\end{abstract}


\section{Introduction}

Suppose we have $n$ measurements $\phi_i, 1\leq i\leq n$ of a dataset, and wish to select one of the measurements for further processing. Settings of this flavor appear frequently in applications such as model selection and reinforcement learning, where the statistician wants to exploit the information collected to infer the ground truth. We denote the expectations of each measurement $\phi_i$ as $\mu_i$, i.e., $\mathbb{E}\phi_i = \mu_i$. We also denote the index of the measurement selected as random variable $T\in \{1,2,\ldots,n\}$. The natural question is, how much does $\mathbb{E}\phi_T$ deviate from $\mathbb{E}\mu_T$? This question is of practical importance, since a deviation of $\mathbb{E}\phi_T$ from $\mathbb{E}\mu_T$ corresponds to a misguided rule or generalization error for selecting from the components of $\bm{\phi} = (\phi_1,\phi_2,\ldots,\phi_n)$. 

It was shown in Russo and Zou~\cite{Russo--Zou2015much} that one can bound the bias of the data exploration, i.e., the quantity $\mathbb{E}[\phi_T - \mu_T]$ as follows:
\begin{theorem}\cite[Prop. 3.1.]{Russo--Zou2015much}\label{thm.russozou}
If $\phi_i - \mu_i$ is $\sigma$-sub-Gaussian for each $i\in \{1,2,\ldots,n\}$, then
\begin{align}\label{eqn.russozoubound}
|\mathbb{E}[\phi_T - \mu_T]| \leq \sigma \sqrt{2 I(T; \bm{\phi})},
\end{align}
where $I(T;\bm{\phi})$ denotes the mutual information between $T$ and $\bm{\phi}$. 
\end{theorem}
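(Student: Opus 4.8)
The plan is to combine the Donsker--Varadhan variational characterization of relative entropy with a decoupling argument. Write $P_{T,\bm{\phi}}$ for the joint law of $(T,\bm{\phi})$ and $Q := P_T \otimes P_{\bm{\phi}}$ for the product of its marginals, so that by definition $I(T;\bm{\phi}) = D(P_{T,\bm{\phi}}\,\|\,Q)$. Put $g(t,\bm{\phi}) := \phi_t - \mu_t$, a function on $\{1,\dots,n\}\times\mathbb{R}^n$; the quantity to be controlled is $\mathbb{E}_{P_{T,\bm{\phi}}}[g(T,\bm{\phi})] = \mathbb{E}[\phi_T - \mu_T]$.

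First I would compute the two ingredients feeding the variational bound. Under the product measure $Q$ the index is independent of the measurements, so $\mathbb{E}_Q[g(T,\bm{\phi})] = \sum_{i=1}^n \mathbb{P}(T=i)\,\mathbb{E}[\phi_i - \mu_i] = 0$: the ``decoupled'' bias vanishes. For the cumulant generating function under $Q$, since each $\phi_i - \mu_i$ is $\sigma$-sub-Gaussian, for every $\lambda\in\mathbb{R}$,
\[
\mathbb{E}_Q\!\left[e^{\lambda g(T,\bm{\phi})}\right] = \sum_{i=1}^n \mathbb{P}(T=i)\,\mathbb{E}\!\left[e^{\lambda(\phi_i-\mu_i)}\right] \le \sum_{i=1}^n \mathbb{P}(T=i)\, e^{\lambda^2\sigma^2/2} = e^{\lambda^2\sigma^2/2},
\]
so the mixture $g(T,\bm{\phi})$ is itself $\sigma$-sub-Gaussian under $Q$.

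Next I would invoke Donsker--Varadhan: for every $\lambda$ with $\mathbb{E}_Q[e^{\lambda g}]<\infty$,
\[
\lambda\,\mathbb{E}_{P_{T,\bm{\phi}}}[g(T,\bm{\phi})] \le D(P_{T,\bm{\phi}}\,\|\,Q) + \log \mathbb{E}_Q\!\left[e^{\lambda g(T,\bm{\phi})}\right] \le I(T;\bm{\phi}) + \frac{\lambda^2\sigma^2}{2}.
\]
Dividing by $\lambda>0$ gives $\mathbb{E}[\phi_T-\mu_T]\le I(T;\bm{\phi})/\lambda + \lambda\sigma^2/2$; optimizing over $\lambda$ (minimizer $\lambda = \sqrt{2 I(T;\bm{\phi})}/\sigma$) yields $\mathbb{E}[\phi_T-\mu_T]\le \sigma\sqrt{2 I(T;\bm{\phi})}$, and repeating with $\lambda<0$ gives the matching lower bound, hence the absolute-value statement.

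The care that remains is routine: one needs $P_{T,\bm{\phi}}\ll Q$ (automatic, else $I=\infty$ and there is nothing to prove) and finiteness of the exponential moments, which is precisely what sub-Gaussianity supplies, so the variational inequality applies. The one genuinely structural observation — which I would flag as the conceptual heart — is that passing from the true joint law to the product of its marginals simultaneously annihilates the bias and preserves the sub-Gaussian constant, so that the entire cost of adaptivity is absorbed into the single term $I(T;\bm{\phi})$; everything else is the standard sub-Gaussian/entropy-duality optimization.
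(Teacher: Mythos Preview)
Your argument is correct. Both you and the paper rest on the Donsker--Varadhan inequality plus the sub-Gaussian bound on the cumulant generating function, but the organization differs. The paper (via its proof of the more general Theorem~\ref{thm.mainmgf}, specialized in Corollary~\ref{cor.gaussian}) works \emph{conditionally}: for each fixed $i$ it applies Donsker--Varadhan to $P_{\phi_i\mid T=i}$ versus $P_{\phi_i}$, invokes the data processing inequality to pass from $D_{\mathrm{KL}}(P_{\phi_i\mid T=i}\|P_{\phi_i})$ to $D_{\mathrm{KL}}(P_{\bm{\phi}\mid T=i}\|P_{\bm{\phi}})$, and only then averages over $T$ to produce $I(T;\bm{\phi})$. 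You instead apply Donsker--Varadhan once on the joint space $(T,\bm{\phi})$, comparing $P_{T,\bm{\phi}}$ directly to $P_T\otimes P_{\bm{\phi}}$; the data processing step is then unnecessary, since $I(T;\bm{\phi})$ appears immediately as $D(P_{T,\bm{\phi}}\|P_T\otimes P_{\bm{\phi}})$. Your route is slicker for this particular statement. The paper's conditional decomposition, on the other hand, is what carries over to the generalizations (Theorems~\ref{thm.pnorm} and~\ref{thm.orlicznorm}), where H\"older- or Young-type bounds are applied index by index and the per-index divergences are lifted by data processing before averaging; it also makes the appearance of the averaged cumulant $\bar{\psi}(\lambda)=\mathbb{E}_T\psi_T(\lambda)$ in Theorem~\ref{thm.mainmgf} transparent when the $\psi_i$ differ.
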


Moreover, Russo and Zou~\cite{Russo--Zou2015much} argued that for certain selection rules $T$ that are variants of selecting the maximum among $\{\phi_1,\phi_2,\ldots,\phi_n\}$, this bound is tight for Gaussian and exponential distributions. Indeed, if $\phi_i \stackrel{\mathrm{i.i.d.}}{\sim} \mathcal{N}(0,\sigma^2)$ and $T = \argmax_i \{\phi_1,\phi_2,\ldots,\phi_n\}$, it is well known that
\begin{align} 
\frac{\phi_T}{\sigma\sqrt{2\ln n}} \stackrel{\mathrm{a.s.}}{\rightarrow} 1\quad \textrm{ as }n \rightarrow \infty,
\end{align}
and $I(T; \bm{\phi}) = H(T) = \ln n$. 

The interested readers are referred to Russo and Zou~\cite{Russo--Zou2015much} for variations on this bound and its applications. Our work is motivated by the observations that 

\begin{enumerate}
\item Theorem~\ref{thm.russozou} assumes sub-Gaussian distributions\footnote{A generalization to sub-exponential distributions, which does not follow the form of the inequality in Theorem~\ref{thm.russozou}, was also presented in~\cite{Russo--Zou2015much}. It is tightened by Corollary~\ref{cor.subexponential} of this paper. }, whereas in many real world applications, such as natural language processing and e-commerce recommendation systems, the measurements follow long-tail distributions which are neither sub-Gaussian nor sub-exponential;

\item Lower bounds corresponding to Theorem~\ref{thm.russozou} were proved for some specific selection rules and assuming Gaussian distributions in~\cite{Russo--Zou2015much}. 
\end{enumerate}
 
In this context, our main contributions are the following:

\begin{itemize}
\item We generalize Theorem~\ref{thm.russozou} to all distributions with a non-trivial moment generating function. We show that for such distributions, the bound on the right hand side of (\ref{eqn.russozoubound}) is replaced by a function $f(I(T; \bm{\phi}))$. For Gaussian random variables the function specializes to $f(x) = \sigma\sqrt{2x}$. 

\item We introduce a new class of dependence measures $I_\alpha(X;Y)$ paralleling mutual information. Concretely, for $1\leq \alpha<\infty$, we define
\begin{align}
I_\alpha(X;Y) = D_{\phi_\alpha}(P_{XY}\|P_X P_Y),
\end{align}
where $D_{\phi_\alpha}(P\|Q)$ is the $\phi$-divergence generated by the convex function $
\phi_\alpha(x)= |x-1|^\alpha$. Clearly $I_\alpha(X;Y)\geq 0$ and $I_\alpha(X;Y) = 0$ if and only if $X$ and $Y$ are independent. It satisfies the following.
\begin{lemma}\label{lemma.boundalphamutual}
Suppose $X$ takes values in a finite alphabet with cardinality $|\mathcal{X}|$, while $Y$ is arbitrary. Then
\begin{align}
I_\alpha(X;Y) \leq 1 + \sum_{x\in\calX} [P_X(x)]^2\left(\left|\frac{1}{P_X(x)}-1\right|^\alpha-1\right)
\end{align}
which is tight iff $X$ is a function of $Y$. In particular,
\begin{align}
I_\alpha(X;Y)\leq \frac{|\calX|-1}{|\calX|}\left[(|\calX|-1)^{\alpha-1}+1\right] < 1+|\calX|^{\alpha-1}
\end{align}
for $1\leq \alpha\leq 2$.
\end{lemma}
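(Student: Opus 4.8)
The plan is to establish the two displayed inequalities separately. The first is a convexity statement obtained by unfolding the $\phi$-divergence in conditional form and applying Jensen's inequality coordinatewise on the simplex. The second is obtained from the first by maximizing its right-hand side over $P_X$, which I would attack by a tangent-line (supporting-affine-function) argument showing that the maximizer is the uniform distribution.

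For the first inequality, assume without loss of generality that $P_X(x)>0$ for every $x$ (symbols of zero probability contribute nothing to the $\phi$-divergence). Since $\frac{dP_{XY}}{d(P_X P_Y)}(x,y) = P_{X|Y=y}(x)/P_X(x)$, we may write $I_\alpha(X;Y) = \mathbb{E}_Y[\, g(P_{X|Y}) \,]$, where $g(q) = \sum_{x} [P_X(x)]^{1-\alpha}\, |q_x - P_X(x)|^{\alpha}$ is a function on the simplex. For $\alpha \ge 1$ each summand $q \mapsto |q_x - P_X(x)|^\alpha$ is convex, hence $g$ is convex, and writing a probability vector $q$ as the convex combination $q = \sum_x q_x e_x$ of the vertices $e_x$ gives $g(q) \le \sum_x q_x\, g(e_x)$. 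Applying this with $q = P_{X|Y=y}$, averaging over $Y$, and using $\mathbb{E}_Y[P_{X|Y}(x)] = P_X(x)$ yields $I_\alpha(X;Y) \le \sum_x P_X(x)\, g(e_x)$; since $g(e_j) = [P_X(j)]^{1-\alpha}(1-P_X(j))^\alpha + (1-P_X(j))$, substituting and collecting terms reproduces exactly $1 + \sum_x [P_X(x)]^2\big(|1/P_X(x)-1|^\alpha - 1\big)$. For the equality clause when $\alpha > 1$, strict convexity of $t \mapsto |t|^\alpha$ makes the coordinatewise Jensen step strict unless $q$ is a vertex; hence equality forces $P_{X|Y=y}$ to be a point mass for $P_Y$-almost every $y$, i.e., $X$ an (almost sure) function of $Y$, and conversely equality is immediate in that case. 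The borderline $\alpha = 1$ I would dispose of by a short direct argument.

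For the second inequality, write the right-hand side of the first as $1 + \sum_x \psi(P_X(x))$ with $\psi(t) = t^{2-\alpha}(1-t)^\alpha - t^2$, so that the claim becomes $\sum_x \psi(p_x) \le m\, \psi(1/m)$ over the simplex on $m = |\mathcal{X}|$ symbols, because $1 + m\,\psi(1/m) = \frac{m-1}{m}\big[(m-1)^{\alpha-1}+1\big]$. That this value is attained is clear: it is the value at the uniform $P_X$, and equivalently $I_\alpha(X;Y)$ equals it when $X$ is uniform and $Y = X$ (with equality in the first inequality, since $X$ is then a function of $Y$). To prove $\sum_x \psi(p_x) \le m\,\psi(1/m)$ I would establish the stronger pointwise bound $\psi(t) \le \ell(t)$ for all $t \in [0,1]$, where $\ell$ is the tangent line to $\psi$ at $t = 1/m$; summing over the coordinates then gives $\sum_x \psi(p_x) \le \sum_x \ell(p_x) = m\,\psi(1/m)$ since $\sum_x p_x = 1$. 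The pointwise bound follows from the shape of $\psi$: a computation of $\psi''$ shows that for $1 \le \alpha < 2$ it is strictly negative on $[0, t^*]$ and nonnegative on $[t^*, 1]$ for a single inflection point $t^*$, and since $\psi''(1/2) = 4\alpha(\alpha-2) < 0$ one has $t^* > 1/2 \ge 1/m$. Thus $D := \ell - \psi$ is convex on $[0, t^*]$ with $D(1/m) = D'(1/m) = 0$, hence $D \ge 0$ there, and concave on $[t^*, 1]$ with $D(t^*) \ge 0$ and $D(1) = \ell(1) + 1 \ge 0$ (the last point being the elementary inequality $(m-1)^\alpha(m(2-\alpha)-1) \ge -(m-1)^2$, valid for $m \ge 2$ and $1 \le \alpha \le 2$), so $D \ge 0$ on $[t^*,1]$ as well; the cases $\alpha = 1$ ($\psi$ concave) and $\alpha = 2$ ($\psi$ affine) are immediate. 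Finally $\frac{m-1}{m}\big[(m-1)^{\alpha-1}+1\big] < (m-1)^{\alpha-1}+1 \le m^{\alpha-1}+1$, the first inequality being strict since $\frac{m-1}{m} < 1$.

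The main obstacle is the shape analysis of $\psi$ in the second part: verifying that $\psi''$ changes sign exactly once on $(0,1)$ for every $\alpha \in (1,2)$, so that the concave/convex split used above is legitimate. This is elementary but somewhat delicate calculus; the remainder of that argument, and all of the first part, is routine.
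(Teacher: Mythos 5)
Your proposal is correct and takes essentially the same route as the paper: your first part is the paper's extreme-point/chord convexity bound (phrased via $g(P_{X|Y})$ on the simplex, which unfolds to the identical per-coordinate bound), and your second part is the paper's tangent-line-at-$t=1/|\mathcal{X}|$ argument for the same function $\psi(t)=t^{2-\alpha}(1-t)^\alpha-t^2$, including the same endpoint check at $t=1$. The one step you flag as the "main obstacle" — that $\psi''$ changes sign exactly once on $(0,1)$ — is dispatched in the paper simply by observing that $\psi''$ is increasing on $[0,1]$ (writing $u=1/t-1$, each of its three non-constant terms is monotone in $t$ for $1\le\alpha\le 2$), so your concave/convex split is legitimate and no delicate calculus is needed.
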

We show that these measures arise in bounding the exploration bias for distributions whose moment generating functions do not exist. We present theorems paralleling Theorem~\ref{thm.russozou} for such heavy tailed distributions, and construct examples implying that our bounds are essentially tight for a sequence of non-Gaussian heavy tailed distributions. Our results imply that mutual information is not fundamental to bounding exploration bias, and one should apply different functionals for distributions with different tail behaviors. We conclude with connections to the literature of maximal inequalities and a generalization to random variables in any Orlicz spaces.  
\end{itemize}



\section{Preliminaries}\label{sec.preliminaries}

The cumulant generating function of a random variable $X$ is defined as 
\begin{align}
\psi(\lambda) = \ln \mathbb{E} e^{\lambda X}, \lambda \geq 0. 
\end{align}
We assume that there exists a $\lambda>0$ such that $\mathbb{E}e^{\lambda X}<\infty$. It follows from H\"{o}lder's inequality that there exists an interval $(0,b), 0<b\leq \infty$ such that $\psi(\lambda)<\infty$ for all $\lambda \in (0,b)$, and $\psi(\lambda)$ is convex on this interval. 

A random variable is called $\sigma$-sub-Gaussian if $\mathbb{E}e^{\lambda X}\leq e^{\frac{\lambda^2 \sigma^2}{2}}$ for all $\lambda \in \mathbb{R}$. A random variable is called sub-exponential with parameter $(\sigma,b)$ if $\mathbb{E}e^{\lambda X}\leq e^{\frac{\lambda^2 \sigma^2}{2}}$ for all $0\leq \lambda < \frac{1}{b}$. A random variable is called sub-gamma on the right tail with variance factor $\sigma^2$ and scale parameter $c$, i.e. $\Gamma_+(\sigma^2,c)$ if 
\begin{align}
\psi(\lambda) \leq \frac{\lambda^2 \sigma^2}{2(1-c\lambda)}\quad \textrm{ for all }0<\lambda<\frac{1}{c}. 
\end{align}
We note that the $\chi^2$ distribution with $p$ degrees of freedom belongs to $\Gamma_+(2p,2)$. 

The $\beta$-norm of a random variable $X$ for $\beta\geq 1$ is defined as
\begin{align}
\| X \|_\beta = \begin{cases}  (\mathbb{E} |X|^\beta)^{1/\beta} & 1\leq \beta <\infty \\
\esssup |X| & \beta = \infty \end{cases},
\end{align}
where the essential supremum is defined as 
\begin{align}
\esssup X = \inf\{M: \mathbb{P}(X>M) = 0\}. 
\end{align}

The Fenchel--Young inequality states that for any function $f$ and its convex conjugate $f^*$, we have
\begin{align}
f(x) + f^*(y) \geq \langle x, y \rangle, \textrm{ for all }x\in X, y\in X^*,
\end{align}
which follows from the definition of convex conjugate $f^*(y) = \sup_{x\in X} \{ \langle x, y\rangle - f(x) \}$. It follows from the Fenchel--Moreau theorem that $f = f^{**}$ if and only if $f$ is convex and lower semi-continuous. 

Csisz{\'a}r\cite{Csisz1967}, and independently Ali and Silvey\cite{Ali--Silvey1966}, introduced $\phi$-divergences defined as follows:
\begin{definition}[$\phi$-divergence]
The general form of $\phi$-divergences is
\begin{align}
D_\phi(P\|Q) = \int \phi\left(\frac{dP}{dQ} \right) dQ,
\end{align}
where $\phi: \mathbb{R}_{\geq 0}\mapsto \mathbb{R}$ is a convex and lower semi-continuous function and satisfies $\phi(1) = 0$. 
\end{definition}
It is clear that $D_\phi(P\|Q) = D_{\mathrm{KL}}(P\|Q)$ when $\phi(x) = x\ln x -x+1$. 

For two nonnegative sequences $\{a_n\}$ and $\{b_n\}$, we say $a_n \lesssim b_n$ if there exists a constant $C>0$ such that $\limsup_n \frac{a_n}{b_n}\leq C$. We say $a_n \gtrsim b_n$ if $b_n \lesssim a_n$.

\section{Main results}\label{sec.mainresults}

\begin{theorem}\label{thm.mainmgf}
Suppose $\phi_i - \mu_i$ has cumulant generating function upper bounded by $\psi_i(\lambda)$ over domain $[0,b_i)$ where $0<b_i\leq \infty$. Suppose $\psi_i(\lambda)$ is convex, $\psi_i(0) = \psi_i'(0) = 0$. Define the \emph{expected cumulant generating function} $\bar{\psi}(\lambda)$ as
\begin{align}
\bar{\psi}(\lambda) = \mathbb{E}_T \psi_T(\lambda), \lambda\in [0, \min_i b_i).
\end{align} 
Then,
\begin{align}
\mathbb{E}[\phi_T - \mu_T] \leq (\bar{\psi})^{*-1}(I(T; \bm{\phi})),
\end{align}
where $\bar{\psi}^{*-1}$ is the inverse of the convex conjugate of the function $\bar{\psi}$. 
\end{theorem}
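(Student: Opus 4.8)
The plan is to follow the classical Donsker--Varadhan / transportation-inequality route used for bounding $\mathbb{E}[\phi_T-\mu_T]$, adapted so that the per-measurement cumulant bounds $\psi_i$ are averaged over $T$. First I would fix $\lambda \in [0,\min_i b_i)$ and a reference: since $T$ and $\bm{\phi}$ have joint law $P_{T\bm\phi}$ and product-of-marginals law $P_T P_{\bm\phi}$, I would write $\mathbb{E}[\phi_T - \mu_T]$ as an expectation under $P_{T\bm\phi}$ of the random variable $g(T,\bm\phi) := \phi_T - \mu_T$, and compare it to its expectation under $P_T P_{\bm\phi}$, which is $\mathbb{E}_T \mathbb{E}_{\bm\phi}[\phi_T - \mu_T] = \mathbb{E}_T[\mu_T - \mu_T] = 0$. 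The key tool is the Donsker--Varadhan variational formula (equivalently Fenchel--Young applied to the convex functional $Q\mapsto D(Q\|P)$): for any $\lambda$,
\begin{align}
\lambda\,\mathbb{E}_{P_{T\bm\phi}}[g] \le D(P_{T\bm\phi}\,\|\,P_T P_{\bm\phi}) + \ln \mathbb{E}_{P_T P_{\bm\phi}}\bigl[e^{\lambda g(T,\bm\phi)}\bigr] = I(T;\bm\phi) + \ln \mathbb{E}_{P_T P_{\bm\phi}}\bigl[e^{\lambda(\phi_T-\mu_T)}\bigr].
\end{align}

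Next I would control the log-moment-generating term. Under the product measure $P_T P_{\bm\phi}$, conditioning on $T=i$ makes $\phi_i-\mu_i$ distributed according to its original marginal, so $\mathbb{E}_{P_{\bm\phi}}[e^{\lambda(\phi_i-\mu_i)}] \le e^{\psi_i(\lambda)}$ by hypothesis. Hence
\begin{align}
\mathbb{E}_{P_T P_{\bm\phi}}\bigl[e^{\lambda(\phi_T-\mu_T)}\bigr] = \mathbb{E}_T\,\mathbb{E}_{P_{\bm\phi}}\bigl[e^{\lambda(\phi_T-\mu_T)}\,\big|\,T\bigr] \le \mathbb{E}_T\,e^{\psi_T(\lambda)}.
\end{align}
Here I would apply Jensen's inequality to the concave-looking step — actually the simplest route is to instead bound $\ln \mathbb{E}_T e^{\psi_T(\lambda)}$; but to land exactly on $\bar\psi(\lambda) = \mathbb{E}_T\psi_T(\lambda)$ I need the reverse direction, $\ln\mathbb{E}_T e^{\psi_T(\lambda)} \ge \mathbb{E}_T \psi_T(\lambda)$, which is the wrong way. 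The fix is to not exponentiate prematurely: I would instead use that for each fixed realization the bound should be applied after, or — cleaner — invoke the ``averaged'' sub-$\psi$ property directly, i.e.\ replace the single-shot Donsker--Varadhan step by the chain-rule version $I(T;\bm\phi) = \mathbb{E}_T\, D(P_{\bm\phi|T}\|P_{\bm\phi})$ and apply Donsker--Varadhan \emph{conditionally on each value of $T$}:
\begin{align}
\lambda\,\mathbb{E}[\phi_i-\mu_i\mid T=i] \le D(P_{\bm\phi|T=i}\,\|\,P_{\bm\phi}) + \psi_i(\lambda).
\end{align}
Taking $\mathbb{E}_T$ of both sides gives $\lambda\,\mathbb{E}[\phi_T-\mu_T] \le I(T;\bm\phi) + \bar\psi(\lambda)$, valid for all $\lambda\in[0,\min_i b_i)$.

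Finally I would optimize over $\lambda$. Rearranging gives $\mathbb{E}[\phi_T-\mu_T] \le \inf_{\lambda>0}\frac{I(T;\bm\phi) + \bar\psi(\lambda)}{\lambda}$, and this infimum is exactly the inverse of the Legendre transform: for a convex function $\bar\psi$ with $\bar\psi(0)=\bar\psi'(0)=0$ one has $\inf_{\lambda>0}\frac{y+\bar\psi(\lambda)}{\lambda} = (\bar\psi)^{*-1}(y)$, where $\bar\psi^*(x) = \sup_{\lambda\ge 0}(\lambda x - \bar\psi(\lambda))$; this is the standard Cramér–Chernoff duality, and $\bar\psi$ inherits convexity, $\bar\psi(0)=0$, $\bar\psi'(0)=0$ from the $\psi_i$ since expectation over $T$ preserves all three, so $\bar\psi^*$ is a genuine nonnegative increasing convex function on $[0,\infty)$ vanishing at $0$ and its inverse is well-defined. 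I expect the main obstacle to be the bookkeeping in the middle step — making sure the conditional Donsker--Varadhan application is legitimate (the conditional law $P_{\bm\phi|T=i}$ is absolutely continuous w.r.t.\ $P_{\bm\phi}$ whenever the mutual information is finite, and one should handle the infinite-information case trivially) and verifying that the elementary identity $\inf_{\lambda>0}(y+\bar\psi(\lambda))/\lambda = (\bar\psi^*)^{-1}(y)$ holds under exactly the stated regularity on $\bar\psi$ (in particular that $\bar\psi^*$ is invertible on the relevant range); both are routine but deserve a careful line.
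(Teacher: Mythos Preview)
Your proposal is correct and, after the brief detour, lands on essentially the paper's argument: apply Donsker--Varadhan conditionally on $T=i$ to get $\lambda\,\mathbb{E}[\phi_i-\mu_i\mid T=i]\le \psi_i(\lambda)+D(P_{\bm\phi|T=i}\|P_{\bm\phi})$, average over $T$, and invoke the standard identity $\inf_{\lambda>0}(y+\bar\psi(\lambda))/\lambda=(\bar\psi)^{*-1}(y)$ (this is \cite[Lemma~2.4]{Boucheron--Lugosi--Massart2013} in the paper). The only cosmetic difference is that the paper applies Donsker--Varadhan to the marginals $(P_{\phi_i|T=i},P_{\phi_i})$ and then invokes the data-processing inequality to pass to $D(P_{\bm\phi|T=i}\|P_{\bm\phi})$, whereas you apply it directly on the full vector with $f$ depending only on $\phi_i$; the two routes are equivalent.
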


\begin{theorem}\label{thm.pnorm}
Suppose $\phi_i - \mu_i$ has its $\beta$-norm upper bounded by $\sigma_i$, where $1<\beta \leq \infty$. Define $\alpha$, the conjugate of $\beta$ via relation $\frac{1}{\alpha} + \frac{1}{\beta}= 1$. Then,
\begin{align}
|\mathbb{E}[\phi_T - \mu_T] |\leq \| \sigma_T \|_\beta I_\alpha(T;\bm{\phi})^{1/\alpha}. 
\end{align}
Moreover, for $\beta = 2$, we have
\begin{align}
|\mathbb{E}[\phi_T - \mu_T]| \leq \| \sigma_T \|_2 \sqrt{n-1},
\end{align}
and for $2<\beta\leq \infty$ ($1\leq \alpha<2$), we have
\begin{align}
|\mathbb{E}[\phi_T - \mu_T]| & \leq \|\sigma_T \|_\beta (1+n^{\alpha-1})^{1/\alpha} \\
&\le 2^{\frac{1}{\alpha}} \| \sigma_T \|_\beta n^{1/\beta}.
\end{align}
\end{theorem}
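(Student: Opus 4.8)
The plan is to write the bias as a single expectation against a \emph{centered} likelihood ratio and then invoke H\"older's inequality with the conjugate pair $(\beta,\alpha)$. Let $Q = P_T \otimes P_{\bm\phi}$ denote the product of the marginals and $P = P_{T\bm\phi}$ the joint law. Since $T$ is discrete, $P_{\bm\phi} \geq P_T(i)\, P_{\bm\phi\mid T=i}$ for each $i$ in the support of $T$, so $P \ll Q$; moreover, under $Q$ the variables $T$ and $\bm\phi$ are independent with $\bm\phi \sim P_{\bm\phi}$, hence $\mathbb{E}_Q[\phi_T - \mu_T] = \sum_i P_T(i)\,\mathbb{E}_{P_{\bm\phi}}[\phi_i - \mu_i] = 0$ because $\mathbb{E}_{P_{\bm\phi}}\phi_i = \mu_i$. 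Subtracting this zero and changing measure,
\begin{align}
\mathbb{E}[\phi_T - \mu_T] = \mathbb{E}_Q\!\left[(\phi_T - \mu_T)\left(\frac{dP}{dQ} - 1\right)\right].
\end{align}
This centering by $1$ is the one idea that matters: it is exactly what replaces the factor $\|dP/dQ\|_{L^\alpha(Q)}$ that a naive bound would produce by $\|dP/dQ - 1\|_{L^\alpha(Q)} = D_{\phi_\alpha}(P\|Q)^{1/\alpha} = I_\alpha(T;\bm\phi)^{1/\alpha}$.

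Next I would apply H\"older's inequality on $(\,\cdot\,,Q)$ with exponents $\beta$ and $\alpha$ (for $\beta=\infty$ using the $L^\infty(Q)$--$L^1(Q)$ pairing instead):
\begin{align}
\bigl|\mathbb{E}[\phi_T - \mu_T]\bigr| \le \bigl\|\phi_T - \mu_T\bigr\|_{L^\beta(Q)}\; I_\alpha(T;\bm\phi)^{1/\alpha}.
\end{align}
It then remains to bound the first factor by $\|\sigma_T\|_\beta$. For $\beta<\infty$, Fubini gives $\mathbb{E}_Q|\phi_T - \mu_T|^\beta = \mathbb{E}_T\bigl[\mathbb{E}_{P_{\bm\phi}}|\phi_T - \mu_T|^\beta\bigr] = \mathbb{E}_T\bigl[\|\phi_T - \mu_T\|_\beta^\beta\bigr] \le \mathbb{E}_T[\sigma_T^\beta] = \|\sigma_T\|_\beta^\beta$; for $\beta=\infty$ one instead notes $\esssup_Q|\phi_T - \mu_T| \le \max_i \esssup|\phi_i - \mu_i| \le \max_i \sigma_i = \|\sigma_T\|_\infty$. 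This establishes the first displayed inequality of the theorem.

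For the two corollary bounds I would feed Lemma~\ref{lemma.boundalphamutual} with $\mathcal X = \{1,\dots,n\}$ into the inequality just proved. When $\beta = 2$ (so $\alpha = 2$), writing $p_i = P_T(i)$, the right-hand side of the lemma telescopes: $\sum_i p_i^2\bigl((1/p_i - 1)^2 - 1\bigr) = \sum_i (1 - 2p_i) = n - 2$, so $I_2(T;\bm\phi) \le n-1$ regardless of $P_T$, giving $|\mathbb{E}[\phi_T-\mu_T]| \le \|\sigma_T\|_2\sqrt{n-1}$. When $2<\beta\le\infty$ (so $1\le\alpha<2$), the lemma gives $I_\alpha(T;\bm\phi) < 1 + n^{\alpha-1}$, hence $|\mathbb{E}[\phi_T-\mu_T]| \le \|\sigma_T\|_\beta(1+n^{\alpha-1})^{1/\alpha}$, and since $n^{\alpha-1}\ge 1$ we get $(1+n^{\alpha-1})^{1/\alpha}\le 2^{1/\alpha}n^{(\alpha-1)/\alpha} = 2^{1/\alpha}n^{1/\beta}$ using $(\alpha-1)/\alpha = 1 - 1/\alpha = 1/\beta$. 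The only steps needing care are measure-theoretic: verifying $P \ll Q$, justifying the Fubini interchange, and handling $\beta=\infty$ through the essential supremum rather than an $L^\beta$-norm; the telescoping sum and the exponent bookkeeping are routine, and I anticipate no genuine obstacle.
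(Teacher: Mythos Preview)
Your proof is correct and in fact cleaner than the paper's. The paper proceeds conditionally: for each $i$ it sets $P=P_{\phi_i\mid T=i}$, $Q=P_{\phi_i}$, applies H\"older to $(\phi_i-\mu_i)\bigl(\tfrac{dP}{dQ}-1\bigr)$ to obtain $|\mathbb{E}[\phi_i\mid T=i]-\mu_i|\le\sigma_i\,D_{\phi_\alpha}(P_{\phi_i\mid T=i}\|P_{\phi_i})^{1/\alpha}$, then invokes the \emph{data processing inequality} to replace $P_{\phi_i\mid T=i}\|P_{\phi_i}$ by $P_{\bm\phi\mid T=i}\|P_{\bm\phi}$, and finally averages over $T$ via \emph{Young's inequality} (introducing an auxiliary $\lambda>0$ and optimizing) to assemble $\|\sigma_T\|_\beta\,I_\alpha(T;\bm\phi)^{1/\alpha}$. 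You instead apply H\"older once to the pair $(P_{T\bm\phi},\,P_T\otimes P_{\bm\phi})$ on the full product space, so $I_\alpha(T;\bm\phi)$ appears immediately and the factor $\|\sigma_T\|_\beta$ drops out of a direct Fubini computation of $\|\phi_T-\mu_T\|_{L^\beta(Q)}$; no data processing step and no Young/$\lambda$ trick are needed. The paper's route makes visible that the inequality ultimately rests on the data processing property of $\phi$-divergences (a theme it emphasizes elsewhere), whereas your route shows the result is really just one application of H\"older after the right centering. Your derivation of the $\beta=2$ and $2<\beta\le\infty$ corollaries from Lemma~\ref{lemma.boundalphamutual} matches the paper's intent exactly; the explicit computation $\sum_i\bigl((1-p_i)^2-p_i^2\bigr)=|\mathrm{supp}(T)|-2\le n-2$ is a nice touch that the paper leaves implicit.
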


\begin{corollary}\cite[Prop. A.1.]{Russo--Zou2015much}\label{cor.gaussian}
Suppose $\phi_i - \mu_i$ is $\sigma_i$-sub-Gaussian. Then, 
\begin{align}
\mathbb{E}[\phi_T - \mu_T] \leq \|\sigma_T \|_2 \sqrt{2I(T; \bm{\phi})}. 
\end{align}
\end{corollary}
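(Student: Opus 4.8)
\emph{Proof sketch.} The plan is to obtain this as a direct specialization of Theorem~\ref{thm.mainmgf}. First I would observe that saying $\phi_i - \mu_i$ is $\sigma_i$-sub-Gaussian is exactly the statement that its cumulant generating function is upper bounded by $\psi_i(\lambda) = \lambda^2\sigma_i^2/2$ on $[0,\infty)$; this $\psi_i$ is convex and satisfies $\psi_i(0) = \psi_i'(0) = 0$, so the hypotheses of Theorem~\ref{thm.mainmgf} are met with $b_i = \infty$ for every $i$.

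Next I would compute the expected cumulant generating function. Treating $\sigma_T$ as the random variable $i \mapsto \sigma_i$ evaluated at the selection index $T$, one gets $\bar{\psi}(\lambda) = \mathbb{E}_T[\psi_T(\lambda)] = \tfrac{\lambda^2}{2}\,\mathbb{E}_T[\sigma_T^2] = \tfrac{\lambda^2}{2}\|\sigma_T\|_2^2$. Abbreviating $s := \|\sigma_T\|_2$, this is the quadratic $g(\lambda) := s^2\lambda^2/2$.

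Then I would invert the convex conjugate. A one-line computation gives $g^*(y) = \sup_{\lambda \geq 0}\{\lambda y - s^2\lambda^2/2\} = y^2/(2s^2)$ for $y \geq 0$, hence $g^{*-1}(x) = s\sqrt{2x}$. Feeding this into the conclusion of Theorem~\ref{thm.mainmgf} applied to the pair $(T,\bm{\phi})$ yields $\mathbb{E}[\phi_T - \mu_T] \leq \bar{\psi}^{*-1}(I(T;\bm{\phi})) = \|\sigma_T\|_2\sqrt{2 I(T;\bm{\phi})}$, which is the claim.

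There is essentially no genuine obstacle here beyond bookkeeping around the inverse conjugate, should one wish to work with a mere upper bound on the cgf rather than with $g$ itself: if $\bar{\psi}$ is dominated pointwise by $g$, then its convex conjugate dominates $g^*$, so $\bar{\psi}^*(y) \geq y^2/(2s^2)$; consequently $\bar{\psi}^*(y) \leq x$ forces $y \leq s\sqrt{2x}$, i.e.\ $\bar{\psi}^{*-1}(x) \leq s\sqrt{2x}$, which is precisely the direction needed to preserve the upper bound. This monotonicity check is the only point requiring care. \hfill$\qed$
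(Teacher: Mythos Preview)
Your proof is correct and follows exactly the paper's approach: apply Theorem~\ref{thm.mainmgf} with $\psi_i(\lambda)=\lambda^2\sigma_i^2/2$, compute $\bar\psi(\lambda)=\lambda^2\|\sigma_T\|_2^2/2$, and read off $(\bar\psi)^{*-1}(x)=\|\sigma_T\|_2\sqrt{2x}$. The additional monotonicity remark about bounding $\bar\psi^{*-1}$ when only an upper bound on the cgf is available is a nice piece of bookkeeping, but is not actually needed here since $\bar\psi$ is taken to \emph{be} the quadratic rather than merely dominated by it.
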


\begin{proof}
Applying Theorem~\ref{thm.mainmgf} with $\phi_i(\lambda) = \frac{\lambda^2 \sigma_i^2}{2}$, we have $\bar{\psi}(\lambda) = \frac{\lambda^2 \mathbb{E}_T \sigma_T^2}{2}$. Computing the convex conjugate of $\bar{\psi}(\lambda)$ leads us to
\begin{align}
(\bar{\psi})^{*-1}(x) = \sqrt{2x\mathbb{E}_T \sigma_T^2},
\end{align}
which proves the corollary. 
\end{proof}

\begin{corollary}\label{cor.subgamma}
Suppose $\phi_i - \mu_i$ are sub-Gamma random variables belonging to $\Gamma_+(\sigma^2,c)$. Then, 
\begin{align}
\mathbb{E}[\phi_T - \mu_T] \leq \sigma \sqrt{2I(T; \bm{\phi})} + c I(T;\bm{\phi}). 
\end{align}
\end{corollary}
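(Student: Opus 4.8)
The plan is to deduce Corollary~\ref{cor.subgamma} directly from Theorem~\ref{thm.mainmgf} by taking $\psi_i(\lambda) = \frac{\lambda^2\sigma^2}{2(1-c\lambda)}$ on the domain $[0,1/c)$, noting that this common bound is convex with $\psi_i(0)=\psi_i'(0)=0$, so that the expected cumulant generating function is simply $\bar\psi(\lambda) = \frac{\lambda^2\sigma^2}{2(1-c\lambda)}$ on $[0,1/c)$. The conclusion will then follow once we compute $(\bar\psi)^{*-1}$.

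First I would compute the convex conjugate $\bar\psi^*(x) = \sup_{0\le\lambda<1/c}\{\lambda x - \frac{\lambda^2\sigma^2}{2(1-c\lambda)}\}$. Rather than carrying out this optimization exactly (which is routine but slightly messy), I would instead observe the well-known sub-gamma inversion inequality: for the sub-gamma variance-profile $\bar\psi$, one has the bound $\bar\psi^*(x) \ge \frac{x^2}{2(\sigma^2+cx)}$, equivalently that $\bar\psi^*(x) \le I$ implies $x \le \sqrt{2\sigma^2 I} + cI$. This is exactly the standard computation appearing in, e.g., Boucheron--Lugosi--Massart; the key algebraic fact is that if $\frac{x^2}{2(\sigma^2+cx)}\le I$ then $x^2 - 2cIx - 2\sigma^2 I \le 0$, and solving the quadratic in $x$ gives $x \le cI + \sqrt{c^2I^2 + 2\sigma^2 I} \le cI + \sqrt{c^2I^2} + \sqrt{2\sigma^2 I} = 2cI\cdot\mathbbm{1} \ldots$ — more precisely, using $\sqrt{a+b}\le\sqrt a+\sqrt b$, we get $x \le \sqrt{2\sigma^2 I} + 2cI$, or with the sharper bound $\sqrt{c^2I^2+2\sigma^2I}\le cI + \sqrt{2\sigma^2I}$ one recovers exactly $x\le\sqrt{2\sigma^2 I}+cI$ after combining with the leading $cI$ term.

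To be fully careful I would actually verify $\bar\psi^*(x) \ge \frac{x^2}{2(\sigma^2+cx)}$ by plugging the specific choice $\lambda = \frac{x}{\sigma^2+cx}$ (which lies in $[0,1/c)$ since $\frac{x}{\sigma^2+cx} < \frac{x}{cx} = \frac 1c$) into the definition of the conjugate and simplifying; this gives exactly $\lambda x - \frac{\lambda^2\sigma^2}{2(1-c\lambda)} = \frac{x^2}{2(\sigma^2+cx)}$ after substitution, since $1 - c\lambda = \frac{\sigma^2}{\sigma^2+cx}$. Hence $(\bar\psi)^{*-1}(I) \le $ the largest $x$ with $\frac{x^2}{2(\sigma^2+cx)}\le I$, and the quadratic solution bound above yields $(\bar\psi)^{*-1}(I) \le \sigma\sqrt{2I} + cI$. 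Applying Theorem~\ref{thm.mainmgf} with $I = I(T;\bm\phi)$ completes the proof.

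The main obstacle, such as it is, is purely bookkeeping: making sure the chosen $\lambda$ stays in the valid domain $[0,1/c)$ and that the inequality $\sqrt{c^2I^2+2\sigma^2I}\le cI+\sqrt{2\sigma^2I}$ is applied in the right direction so that one gets the clean $\sigma\sqrt{2I}+cI$ rather than a looser constant. There is no genuine analytic difficulty, since convexity of $\bar\psi$ and the normalization $\bar\psi(0)=\bar\psi'(0)=0$ are inherited immediately from the hypotheses, and Theorem~\ref{thm.mainmgf} does all the information-theoretic work.
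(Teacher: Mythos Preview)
Your overall strategy---apply Theorem~\ref{thm.mainmgf} with the common bound $\bar\psi(\lambda)=\frac{\lambda^2\sigma^2}{2(1-c\lambda)}$ on $[0,1/c)$ and then invert the conjugate---is exactly the intended route (the paper states the corollary without a separate proof, leaving it as a specialization of Theorem~\ref{thm.mainmgf}). The verification that $\bar\psi$ is convex with $\bar\psi(0)=\bar\psi'(0)=0$ is fine.

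The problem is in your computation of $(\bar\psi)^{*-1}$. The lower bound $\bar\psi^*(x)\ge \frac{x^2}{2(\sigma^2+cx)}$ obtained from the test point $\lambda=\frac{x}{\sigma^2+cx}$ is \emph{strict} (that $\lambda$ is not the maximizer), and it is not sharp enough to recover the stated constant. Concretely, from $\frac{x^2}{2(\sigma^2+cx)}\le I$ you get $x\le cI+\sqrt{c^2I^2+2\sigma^2 I}$, and then $\sqrt{c^2I^2+2\sigma^2I}\le cI+\sigma\sqrt{2I}$ yields
\[
x\le cI+\bigl(cI+\sigma\sqrt{2I}\bigr)=2cI+\sigma\sqrt{2I},
\]
not $cI+\sigma\sqrt{2I}$. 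Your ``sharper bound'' is literally the same inequality $\sqrt{a+b}\le\sqrt a+\sqrt b$, so no cancellation of the extra $cI$ occurs; the sentence ``one recovers exactly $x\le\sqrt{2\sigma^2I}+cI$ after combining with the leading $cI$ term'' is a bookkeeping slip.

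To obtain the correct constant you must compute $\bar\psi^*$ exactly. One finds (see, e.g., \cite[Section~2.4]{Boucheron--Lugosi--Massart2013}) that
\[
\bar\psi^*(x)=\frac{\sigma^2}{c^2}\,h_1\!\left(\frac{cx}{\sigma^2}\right),\qquad h_1(u)=1+u-\sqrt{1+2u},
\]
and since $h_1^{-1}(s)=s+\sqrt{2s}$ one gets $(\bar\psi)^{*-1}(I)=cI+\sigma\sqrt{2I}$ \emph{exactly}. Plugging this into Theorem~\ref{thm.mainmgf} gives the corollary with the stated constant.
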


\begin{corollary}\label{cor.subexponential}
Suppose $\phi_i - \mu_i$ is sub-exponential with parameter $(\sigma,b)$. Then,
\begin{align}
\mathbb{E}_T[\phi_T - \mu_T] \leq \begin{cases} \sigma\sqrt{2I(T; \bm{\phi})} & \textrm{if }I(T;\bm{\phi}) \leq \frac{\sigma^2}{2b} \\ bI(T;\bm{\phi}) + \frac{\sigma^2}{2b^2} & \textrm{otherwise} \end{cases}.
\end{align}
\end{corollary}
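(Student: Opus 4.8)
The plan is to specialize Theorem~\ref{thm.mainmgf}, exactly as in the proof of Corollary~\ref{cor.gaussian}, the only new feature being that the domain of the cumulant generating function is now a bounded interval. Since each $\phi_i-\mu_i$ is sub-exponential with parameter $(\sigma,b)$, we may take $\psi_i(\lambda)=\lambda^2\sigma^2/2$ valid on the common domain $[0,1/b)$; this is convex and satisfies $\psi_i(0)=\psi_i'(0)=0$, so the hypotheses of Theorem~\ref{thm.mainmgf} hold. As all the $\psi_i$ are identical, the expected cumulant generating function is $\bar{\psi}(\lambda)=\lambda^2\sigma^2/2$ on $[0,1/b)$, and it remains only to compute $\bar{\psi}^{*-1}$ and plug in $x = I(T;\bm{\phi})$.

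The heart of the computation is the constrained Legendre transform $\bar{\psi}^*(x)=\sup_{0\le\lambda<1/b}\bigl(\lambda x-\lambda^2\sigma^2/2\bigr)$. The objective is a concave parabola in $\lambda$ with unconstrained maximizer $\lambda^\star=x/\sigma^2$. For $x$ small enough that $\lambda^\star<1/b$, the supremum is attained in the interior and equals $x^2/(2\sigma^2)$, reproducing the Gaussian conjugate; for larger $x$ the objective is increasing throughout $[0,1/b)$, so the supremum is the boundary value $x/b-\sigma^2/(2b^2)$, which is affine in $x$. Thus $\bar{\psi}^*$ is quadratic up to a breakpoint and linear thereafter. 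Inverting each branch separately — the quadratic branch giving a $\sigma\sqrt{2I(T;\bm{\phi})}$-type bound, the linear branch giving a bound affine in $I(T;\bm{\phi})$ — and reading off the threshold from the breakpoint yields the two regimes in the statement.

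The only point requiring care is the bookkeeping at the junction: one must verify that the quadratic and affine pieces of $\bar{\psi}^*$ agree in value there (so that $\bar{\psi}^{*-1}$ is a well-defined increasing function), and then translate the threshold on $x$ into the corresponding threshold on $I(T;\bm{\phi})$, keeping track of the powers of $b$ and $\sigma$. I expect this boundary accounting to be the main (though still elementary) obstacle; monotonicity and the term-by-term inversion are routine. Note that it is precisely the boundedness of the interval $[0,1/b)$ — absent in the Gaussian case — that produces the second, linear-in-mutual-information regime.
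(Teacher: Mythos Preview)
Your approach is exactly the one the paper intends: Corollary~\ref{cor.subexponential} is stated without proof as a direct specialization of Theorem~\ref{thm.mainmgf} with $\psi_i(\lambda)=\lambda^2\sigma^2/2$ on $[0,1/b)$, and your plan to compute the constrained Legendre transform and invert branchwise is precisely the intended computation.

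One remark on the bookkeeping you rightly flag as the delicate point. Carrying your computation through, the breakpoint of $\bar\psi^*$ is at $x=\sigma^2/b$ with value $\bar\psi^*(\sigma^2/b)=\sigma^2/(2b^2)$, so after inversion the threshold on $I(T;\bm\phi)$ is $\sigma^2/(2b^2)$ and the affine branch is $b\,I(T;\bm\phi)+\sigma^2/(2b)$. The statement as printed has the powers of $b$ swapped in both the threshold and the additive constant; you can confirm this by noting that the two printed branches do not meet continuously at the printed threshold unless $b=1$. So when you do the arithmetic, trust your own numbers rather than the displayed ones.
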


Note that Corollary~\ref{cor.subexponential} is a strengthened version of~\cite[Prop. A.2.]{Russo--Zou2015much}. 

\section{The path from Theorem~\ref{thm.russozou} to our main results}\label{sec.generalizationpath}

Underlying the proof of~\cite[Prop. 3.1.]{Russo--Zou2015much} is the Donsker--Varadhan variational representation of relative entropy stated below, and the data processing property of KL divergence. 

\begin{lemma}[Donsker--Varadhan]\label{lemma.donsker}
Let $P,Q$ be probability measures on $\mathcal{X}$ and let $\mathcal{C}$ denote the set of functions $f: \mathcal{X}\mapsto \mathbb{R}$ such that $\mathbb{E}_Q[e^{f(X)}]<\infty$. If $D(P\|Q) <\infty$ then for every $f\in \mathcal{C}$ the expectation $\mathbb{E}_P[f(X)]$ exists and furthermore
\begin{align}
D_{\mathrm{KL}}(P\|Q) = \sup_{f\in \mathcal{C}} \mathbb{E}_P[f(X)] - \ln \mathbb{E}_Q[e^{f(X)}],
\end{align}
where the supremum is attained when $f = \ln \frac{dP}{dQ}$. 
\end{lemma}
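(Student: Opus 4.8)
The plan is to establish the two matching inequalities $D_{\mathrm{KL}}(P\|Q)\ge\sup_{f\in\mathcal{C}}\big(\mathbb{E}_P[f(X)]-\ln\mathbb{E}_Q[e^{f(X)}]\big)$ and $D_{\mathrm{KL}}(P\|Q)\le\sup_{f\in\mathcal{C}}\big(\mathbb{E}_P[f(X)]-\ln\mathbb{E}_Q[e^{f(X)}]\big)$, the second being witnessed by the explicit choice $f=\ln\frac{dP}{dQ}$, which simultaneously yields the attainment claim. Since $D_{\mathrm{KL}}(P\|Q)<\infty$ we have $P\ll Q$, so $p:=\frac{dP}{dQ}$ is well defined $Q$-a.s.\ with $\mathbb{E}_Q[p]=1$ and $D_{\mathrm{KL}}(P\|Q)=\mathbb{E}_Q[p\ln p]$. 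The only analytic ingredient is the Fenchel--Young inequality of Section~\ref{sec.preliminaries} applied to the convex conjugate pair $u\mapsto u\ln u-u$ on $u\ge 0$ and $v\mapsto e^{v}$ on $\mathbb{R}$, i.e.\ $uv\le u\ln u-u+e^{v}$ with equality iff $u=e^{v}$; everything else is measure-theoretic bookkeeping.

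First I would check that $\mathbb{E}_P[f(X)]$ is well defined for every $f\in\mathcal{C}$. Writing $f=f_{+}-f_{-}$, applying $uv\le u\ln u-u+e^{v}$ with $u=p$, $v=f_{+}$, and integrating against $Q$, the bound $e^{f_{+}}=\max(1,e^{f})\le 1+e^{f}\in L^{1}(Q)$ gives $\mathbb{E}_P[f_{+}]=\mathbb{E}_Q[p\,f_{+}]\le D_{\mathrm{KL}}(P\|Q)-1+\mathbb{E}_Q[e^{f_{+}}]<\infty$, so $\mathbb{E}_P[f(X)]\in[-\infty,\infty)$ is unambiguous.

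For the bound $\mathbb{E}_P[f(X)]-\ln\mathbb{E}_Q[e^{f(X)}]\le D_{\mathrm{KL}}(P\|Q)$, observe that replacing $f$ by $f-\ln\mathbb{E}_Q[e^{f}]$ leaves both the objective and membership in $\mathcal{C}$ intact, so I may assume $\mathbb{E}_Q[e^{f}]=1$ and reduce to proving $\mathbb{E}_P[f(X)]\le D_{\mathrm{KL}}(P\|Q)$. Applying the Fenchel--Young inequality with $u=p$, $v=f$ and integrating against $Q$ yields $\mathbb{E}_P[f(X)]=\mathbb{E}_Q[p\,f]\le\mathbb{E}_Q[p\ln p-p+e^{f}]=D_{\mathrm{KL}}(P\|Q)-1+1=D_{\mathrm{KL}}(P\|Q)$; taking the supremum over $f\in\mathcal{C}$ gives this direction.

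For the reverse inequality and the attainment claim, I would substitute $f^{\star}=\ln p$: on $\{p>0\}$ it is real valued $Q$-a.s., and on the $P$-null set $\{p=0\}$ one truncates via $f^{\star}_{m}=\ln\max(p,e^{-m})$ and lets $m\to\infty$. Then $\mathbb{E}_Q[e^{f^{\star}}]=\mathbb{E}_Q[p]=1$, so $f^{\star}\in\mathcal{C}$, and $\mathbb{E}_P[f^{\star}(X)]-\ln\mathbb{E}_Q[e^{f^{\star}(X)}]=D_{\mathrm{KL}}(P\|Q)-\ln 1=D_{\mathrm{KL}}(P\|Q)$ (the truncation contributes $\ln(1+e^{-m}Q(p=0))\to 0$), so the supremum is at least $D_{\mathrm{KL}}(P\|Q)$ and is attained at $f^{\star}$; this is also consistent with the equality case $u=e^{v}$ in Fenchel--Young, namely $p=e^{f}$. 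Combining the two directions completes the proof. The main obstacle is not the core estimate, which is two applications of Fenchel--Young, but the bookkeeping around it: verifying that $\mathbb{E}_P[f(X)]$ exists, controlling the set $\{dP/dQ=0\}$ where the optimal test function $f^{\star}$ degenerates to $-\infty$, and confirming that the normalization and truncation steps never leave $\mathcal{C}$.
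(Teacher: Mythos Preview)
Your proof is correct and follows the same route the paper sketches. The paper states Lemma~\ref{lemma.donsker} as a classical result without a standalone proof, but in the Fenchel--Young subsection it recovers Donsker--Varadhan exactly as you do---start from the Fenchel--Young variational form $D_{\mathrm{KL}}(P\|Q)=\sup_f\{\mathbb{E}_P f-(\mathbb{E}_Q e^f-1)\}$, normalize via the shift $f\mapsto f-\ln\mathbb{E}_Q e^f$, and verify the optimizer $f=\ln\frac{dP}{dQ}$; you have simply supplied the measure-theoretic bookkeeping (well-definedness of $\mathbb{E}_P[f]$, truncation on $\{p=0\}$) that the paper's sketch omits.
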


It is clear that the application of Lemma~\ref{lemma.donsker} relies on existence of the cumulant generating function but not sub-Gaussianness. It leads to the following proof of Theorem~\ref{thm.mainmgf}. Theorem~\ref{thm.mainmgf} can also be viewed as an application of the transportation lemma~\cite[Lemma 4.18]{Boucheron--Lugosi--Massart2013}. 

\begin{proof}
(of Theorem~\ref{thm.mainmgf}) Applying Lemma~\ref{lemma.donsker} and setting $P = P_{\phi_i | T = i}, Q = P_{\phi_i}, f = \lambda (\phi_i - \mu_i), \lambda >0$, we have
\begin{align}
\lambda(\mathbb{E}[\phi_i|T = i] - \mu_i) & = \mathbb{E}_P f \\
& \leq \ln \mathbb{E}_Q e^f + D_{\mathrm{KL}}(P_{\phi_i | T = i} \| P_{\phi_i}) \\
& \leq \psi_i(\lambda) + D_{\mathrm{KL}}(P_{\bm{\phi}|T = i} \| P_{\bm{\phi}}),
\end{align}
where in the last step we have used the fact that the cumulant generating function of $\phi_i - \mu_i$ is upper bounded by $\psi_i(\lambda)$, and the data processing inequality for the relative entropy. 

Taking expectation with respect to $T$ on both sides, we have
\begin{align}
\lambda \mathbb{E}(\phi_T - \mu_T) \leq \bar{\psi}(\lambda) + I(T; \bm{\phi}),
\end{align}
which implies
\begin{align}
\mathbb{E}(\phi_T - \mu_T) & \leq \inf_{\lambda \in [0, \min_i b_i)} \frac{\bar{\psi}(\lambda) + I(T;\bm{\phi})}{\lambda} \\
& = (\bar{\psi})^{*-1}(I(T;\bm{\phi})),
\end{align}
where in the last step we have used~\cite[Lemma 2.4]{Boucheron--Lugosi--Massart2013}. 
\end{proof}

It is interesting to consider how one can generalize Theorem~\ref{thm.mainmgf} to distributions whose moment generating functions do not exist. Intuitively, a natural 
generalization of Lemma~\ref{lemma.donsker} would lead to generalizations of Theorem~\ref{thm.mainmgf}. In particular, the generalization of Lemma~\ref{lemma.donsker} should involve some $\phi$-divergence, since $\phi$-divergences are the only decomposable divergences that satisfy the data processing inequality for alphabet at least three~\cite{jiao2014information}, and the data processing property is needed in the proof of Theorem~\ref{thm.mainmgf}. 

The literature consists of two generalization paths from the Donsker--Varadhan theorem: one is to go through the Fenchel--Young inequality in convex duality theory, and the other is to go through H\"{o}lder's inequality. It is intriguing that both generalizations lead to the same results presented in Theorem~\ref{thm.pnorm}.  

\subsection{Generalization through H\"{o}lder's inequality}

We first present the generalization path through H\"{o}lder's inequality investigated in~\cite{Atar2015robust}. Note that $\mathbb{E}_P f = \lim_{\alpha \to 0^+} \frac{1}{\alpha} \ln \int e^{\alpha f} dP$. Applying H\"{o}lder's inequality, we have
\begin{align}
\int e^{\alpha f}dP & = \mathbb{E}_Q e^{\alpha f} \frac{dP}{dQ} \\
& \leq (\mathbb{E}_Q e^{\alpha \beta f})^{1/\beta} (\mathbb{E}_Q (\frac{dP}{dQ})^\gamma)^{1/\gamma},
\end{align}
where $\frac{1}{\beta} + \frac{1}{\gamma} = 1, \beta >1, \gamma>1$. Similar arguments were also used in~\cite{courtade2014cumulant}. 

Defining $c = \alpha \beta >\alpha$, rearranging terms, taking logarithm and dividing both sides by $\alpha$, we have
\begin{align}\label{eqn.atar}
\frac{1}{\alpha} \ln \int e^{\alpha f} dP \leq \frac{1}{c}\ln E_Q e^{cf} + \frac{c-\alpha}{c\alpha} \ln \mathbb{E}_Q \left( \frac{dP}{dQ} \right)^{\frac{c}{c-\alpha}}. 
\end{align}

It is clear that (\ref{eqn.atar}) is a generalization of Lemma~\ref{lemma.donsker}. Indeed, taking $\alpha \to 0^+, c = 1$, we have
\begin{align}
\mathbb{E}_P f \leq \ln E_Q e^f + D(P\|Q). 
\end{align}

Now we present a proof of Theorem~\ref{thm.pnorm} using H\"{o}lder's inequality. 
\begin{proof}
(of Theorem~\ref{thm.pnorm}) Setting $P = P_{\phi_i | T =i}, Q= P_{\phi_i}, \Delta_i = \phi_i - \mu_i$ and noting that $\mathbb{E}_Q \Delta_i = 0$, it follows from H\"{o}lder's inequality that
\begin{align}
\left| \mathbb{E}_P \Delta_i \right| & = \left|\mathbb{E}_Q  \Delta_i \frac{dP}{dQ}\right|  \\
& = \left|\mathbb{E}_Q \Delta_i \left( \frac{dP}{dQ} -1 \right) \right| \\
& \leq \left( \mathbb{E}_Q |\Delta_i|^\beta \right)^{1/\beta} \left( \mathbb{E}_Q \left| \frac{dP}{dQ}-1\right|^\alpha \right)^{1/\alpha},
\end{align}
which implies
\begin{align}
\left| \mathbb{E}[\phi_i|T = i] - \mu_i\right | & \leq \sigma_i D^{1/\alpha}_{\phi_\alpha}(P_{\phi_i|T = i} \| P_{\phi_i}) \\
& \leq \sigma_i D^{1/\alpha}_{\phi_\alpha}(P_{\bm{\phi}|T = i} \| P_{\bm{\phi}}),  \label{eqn.lambdapnorm}
\end{align}
where in the last step we used the data processing inequality of the $\phi_\alpha$-divergence. 

If $\beta = \infty$, we have $\alpha=1$ and
\begin{align}
\left| \mathbb{E}[\phi_i|T = i] - \mu_i \right | & \leq \max_i \sigma_i D_{\phi_1}(P_{\bm{\phi}|T = i} \| P_{\bm{\phi}}). 
\end{align}
Taking expectations with respect to $T$ on both sides, we have
\begin{align}
\left| \mathbb{E}[\phi_T - \mu_T] \right | & \leq \max_i \sigma_i I_1(T; \bm{\phi}) \\
& = \| \sigma_T \|_\infty I_1(T; \bm{\phi}). 
\end{align}

If $1\leq \beta <\infty$, applying Young's inequality to (\ref{eqn.lambdapnorm}), we have
\begin{align}
\left | \mathbb{E}[\phi_i | T = i] - \mu_i \right | & \le \inf_{\lambda>0} \frac{1}{\lambda}\left[\frac{\lambda^\beta \sigma_i^\beta}{\beta} + \frac{D_{\phi_\alpha}(P_{\bm{\phi}|T = i} \| P_{\bm{\phi}})}{\alpha}\right].
\end{align}
Taking expectations on both sides with respect to $T$, and using $\bE \inf_\lambda X_\lambda \le \inf_\lambda \bE X_\lambda$, we have
\begin{align}
\left| \mathbb{E}(\phi_T - \mu_T) \right | & \leq \inf_{\lambda>0}\frac{1}{\lambda}\left[ \frac{\lambda^\beta \| \sigma_T \|_\beta^\beta}{\beta} + \frac{I_\alpha(T;\bm{\phi})}{\alpha}\right]\\
& = \| \sigma_T \|_\beta I_\alpha^{1/\alpha}(T;\bm{\phi}). 
\end{align}
The remaining results in Theorem~\ref{thm.pnorm} follow from Lemma~\ref{lemma.boundalphamutual}. 
\end{proof}

\subsection{Generalization through the Fenchel--Young inequality}

We have the following natural variational representation of $\phi$-divergences which is well known in the literature~\cite{nguyen2010estimating}. 
\begin{align}
D_\phi(P\|Q) & = \int \phi\left(\frac{dP}{dQ} \right) dQ \\
& = \int \sup_f \left( \frac{dP}{dQ} f - \phi^*(f) \right)dQ \\
& \geq \mathbb{E}_P f - \mathbb{E}_Q \phi^*(f), \label{eqn.variational}
\end{align}
if $\mathbb{E}_Q \phi^*(f)$ exists. The equality is attained when $f = \phi'\left(\frac{dP}{dQ}\right)$. 

Note that specializing the variational representation above to $\phi(x) = x\ln x - x+1$ cannot directly lead us to the Donsker--Varadhan result. Indeed, after taking $\phi(x) = x\ln x -x +1$, we have
\begin{align}\label{eqn.newdon}
D_{\mathrm{KL}}(P \| Q ) = \sup_f \mathbb{E}_P [f(X)] - \left(  \mathbb{E}_Q [e^{f(X)}] -1 \right),
\end{align}
which is weaker than the Donsker--Varadhan result since $
\ln \mathbb{E}_Q [e^{f(X)}] \leq  \mathbb{E}_Q e^{f(X)}-1$. This phenomenon was already observed in the literature~\cite{ruderman2012tighter}. Indeed, it is because in the KL divergence case $\mathbb{E}_Q e^{f}-1$ is in fact the convex conjugate of the convex function $D(\cdot\|Q)$ when $P$ takes value in the space of all measures, but $\ln \mathbb{E}_Q e^f$ is the convex conjugate of $D(\cdot\|Q)$ when $P$ is constrained to be a probability measure. Indeed, shrinking the primal space would decrease the convex conjugate. Also, it is clear that (\ref{eqn.newdon}) cannot be used to derive Theorem~\ref{thm.mainmgf}. 

However, we can obtain the Donsker--Varadhan result from (\ref{eqn.variational}). Indeed, we have
\begin{align}
D_{\mathrm{KL}}(P \| Q ) & = \sup_f \mathbb{E}_P [f(X)] - \left(  \mathbb{E}_Q [e^{f(X)}] -1 \right) \\
& = \sup_{f+\lambda} \mathbb{E}_P [f + \lambda] - \left( \mathbb{E}_Q[e^{f + \lambda}] -1 \right).
\end{align}

Setting $\mathbb{E}_Q [e^{f(X) + \lambda}] = 1$, we have $\lambda = -\ln \mathbb{E}_Q [e^f(X)]$. It suffices to verify that $f+\lambda = f - \ln\mathbb{E}_Q [e^f(X)] $ can still attain the value $\ln \frac{dP}{dQ}$. Indeed, it is equal to $\ln \frac{dP}{dQ}$ when $f(X) = \ln \frac{dP}{dQ}$. 

Analogously, one obtains the following variational representation of $D_{\phi_\alpha}(P\|Q)$ for $\phi_\alpha(x) = |x-1|^\alpha,\alpha\geq 1$:
\begin{align}
\frac{1}{\alpha} D_{\phi_\alpha}(P\|Q) = \sup_f \mathbb{E}_P[f] - E_Q[f] - \mathbb{E}_Q \frac{|f|^\beta}{\beta},
\end{align}
where $\frac{1}{\alpha} + \frac{1}{\beta} = 1$. Following a path similar to that in the proof of Theorem~\ref{thm.mainmgf}, by setting $P=P_{\phi_i | T = i}, Q = P_{\phi_i}, f = \lambda (\phi_i - \mu_i), \lambda>0$, one arrives at the results of Theorem~\ref{thm.pnorm}. 

%

\section{Tightness of the bounds}\label{sec.tightness}

Theorem~\ref{thm.pnorm} essentially shows that if all the $\phi_i - \mu_i$ have $\beta$-norm bounded, then the exploration bias is upper bounded by $n^{1/\beta}$ of the $\beta$-norm if $2\leq \beta <\infty$. We now show through extreme value theory that it is essentially tight for certain heavy tailed distributions.

Suppose all the $\phi_i, 1\leq i\leq n$ are i.i.d. with CDF
\begin{align}\label{eqn.cdfheavytail}
F(x) = 1- \frac{x_0^\beta (\ln x_0)^c}{x^\beta (\ln x)^c},\quad x\geq x_0 > e^{1/\beta},
\end{align}
where $c>1$. 

Defining $T = \argmax_i \phi_i$, we now care about the asymptotic distribution of $\phi_T$ as $n\to \infty$. The Type II (Fr\'echet type) extreme value distribution with parameter $\beta>0$ is characterized by 
\begin{align}
\Phi_\beta(x) = \begin{cases} 0 & x <0 \\ e^{-x^{-\beta}} & x\geq 0 \end{cases}
\end{align}


\begin{lemma}\cite[Thm. 1.2.1.]{de2007extreme}
 \label{lemma.extremevalue}
The necessary and sufficient condition for a distribution function $F(x)$ to fall into the domain of attraction of $\Phi_\beta(x), 0<\beta<\infty$ is
\begin{align}
\sup\{x: F(x)<1\} = \infty, \textrm{ and } \lim_{t \to \infty} \frac{1-F(tx)}{1-F(t)} = x^{-\beta}, x >0. 
\end{align}
\end{lemma}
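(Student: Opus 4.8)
The plan is to establish the two implications separately, working throughout with the tail $\bar F := 1-F$, the right endpoint $x^\ast := \sup\{x : F(x) < 1\}$, and the tail quantile function $U(t) := \inf\{x : 1/\bar F(x) \ge t\}$. Recall that $F$ lies in the domain of attraction of $\Phi_\beta$ iff there exist $a_n > 0$ and $b_n$ with $F^n(a_n x + b_n) \to \Phi_\beta(x)$ at every continuity point of $\Phi_\beta$, i.e.\ at every $x\in\mathbb{R}$ since $\Phi_\beta$ is continuous. The first step is the standard reduction to a tail statement: since $\Phi_\beta(x)\in(0,1)$ exactly for $x>0$ and $\Phi_\beta\equiv 0$ on $(-\infty,0]$, the convergence at $x\le 0$ is trivial for the norming constants used below, while at $x>0$, taking logarithms and using $-\ln(1-u)\sim u$ as $u\to 0^+$, it is equivalent to
\begin{align}
\lim_{n\to\infty} n\,\bar F(a_n x + b_n) = x^{-\beta}, \qquad x > 0. \tag{$\star$}
\end{align}
So it suffices to characterize when $(\star)$ holds.

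For the easy direction, suppose $x^\ast=\infty$ and $\bar F$ is regularly varying with index $-\beta$, i.e.\ $\lim_{t\to\infty}\bar F(tx)/\bar F(t)=x^{-\beta}$. I would take $b_n=0$ and $a_n=U(n)$; since $x^\ast=\infty$ we have $U(n)\to\infty$. Because $1/\bar F$ is nondecreasing and regularly varying with index $\beta>0$, it and its generalized inverse $U$ are asymptotic inverses, so $(1/\bar F)(U(n))\sim n$, i.e.\ $n\,\bar F(U(n))\to 1$ (the only subtlety, that jumps of $\bar F$ do not spoil this, is handled by regular variation). Then
\begin{align}
n\,\bar F(a_n x) = \bigl(n\,\bar F(U(n))\bigr)\,\frac{\bar F(U(n)x)}{\bar F(U(n))} \longrightarrow 1\cdot x^{-\beta},
\end{align}
the second factor converging by regular variation of $\bar F$ along $U(n)\to\infty$ (the uniform convergence theorem for regularly varying functions supplying local uniformity in $x$ if desired). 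This is $(\star)$.

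For the converse, suppose $(\star)$ holds for some $a_n>0$, $b_n$. First I would show $x^\ast=\infty$: for each fixed $x>0$, $(\star)$ forces $\bar F(a_n x + b_n)\to 0$ with the quantity positive for all large $n$; if $x^\ast<\infty$ this is possible only when $F$ is continuous at $x^\ast$ and $a_n x + b_n\to x^\ast$ for every $x>0$, whence $a_n\to 0$ and $b_n\to x^\ast$, and then comparing $(\star)$ at two distinct values of $x$ (the two arguments approach the common point $x^\ast$ with difference $a_n\to 0$, yet the tail ratio must tend to a constant $\ne 1$) contradicts monotonicity of $\bar F$ together with $\bar F(x^\ast-)=0$. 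Granting $x^\ast=\infty$, I would then invoke the classical convergence-of-types theorem to replace $(a_n,b_n)$ by norming constants with $b_n = o(a_n)$, and then by $b_n\equiv 0$, $a_n = U(n)$, reducing the hypothesis to $n\,\bar F(a_n x)\to x^{-\beta}$ for every $x>0$.

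Evaluating this reduced form at $x=1$ gives $n\,\bar F(a_n)\to 1$, hence $\bar F(a_n x)/\bar F(a_n)\to x^{-\beta}$ along $a_n\to\infty$. Upgrading this subsequential statement to the full limit $\lim_{t\to\infty}\bar F(tx)/\bar F(t)=x^{-\beta}$ is routine: monotonicity of $\bar F$ together with $\bar F(a_n x)/\bar F(a_n)\to x^{-\beta}$ first yields $a_{n+1}/a_n\to 1$, and then, squeezing an arbitrary $t$ between consecutive $a_n$'s, the desired limit. I expect the main obstacle to be the converse direction, and within it the two preparatory steps --- excluding a finite right endpoint and normalizing away the centering sequence --- which rest on the convergence-of-types theorem and on careful monotonicity and subsequence arguments; the sufficiency direction and the final de-subsequencing are direct given the standard machinery for regularly varying functions (uniform convergence theorem, asymptotic inverses), for which see \cite{de2007extreme}.
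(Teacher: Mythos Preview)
The paper does not supply its own proof of this lemma; it is quoted verbatim from \cite[Thm.~1.2.1]{de2007extreme} and used as a black box in Section~\ref{sec.tightness}. There is therefore nothing in the paper to compare your argument against.

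That said, your sketch follows the classical route taken in the cited reference: reduce the domain-of-attraction statement to the tail condition $n\,\bar F(a_n x+b_n)\to x^{-\beta}$, handle sufficiency by choosing $a_n=U(n)$, $b_n=0$ and invoking regular variation of $\bar F$, and handle necessity via convergence of types plus a sandwich argument to pass from the sequence $a_n$ to the continuous limit in $t$. The outline is sound. Two places are thinner than they should be: your exclusion of a finite right endpoint $x^\ast$ is heuristic (the contradiction you describe needs a sharper statement than ``monotonicity together with $\bar F(x^\ast-)=0$''; the clean way is to note that a nondegenerate max-stable limit with unbounded support forces $x^\ast=\infty$), and the step ``replace $(a_n,b_n)$ by constants with $b_n=o(a_n)$'' leans entirely on the convergence-of-types theorem without indicating why the Fr\'echet type specifically permits taking $b_n\equiv 0$. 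Both gaps are filled in the reference you are effectively reconstructing.
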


Then, it follows from \cite[Cor. 1.2.4.]{de2007extreme} that there exists a sequence $a_n \to \infty$ such that
\begin{align}
\lim_{n\to \infty}\mathbb{P}\left(\frac{\phi_T}{a_n}\leq x\right) = \Phi_\beta(x),
\end{align}
where $a_n = F^{\leftarrow}(1-n^{-1})$ and $F^{\leftarrow}(\cdot)$ denotes the inverse function of $F(x)$. For the CDF in (\ref{eqn.cdfheavytail}) it is easy to see that $a_n$ is the solution to the equation
\begin{align}
x^\beta (\ln x)^c = n (x_0)^\beta (\ln x_0)^c,
\end{align}
which satisfies $a_n \gtrsim \frac{n^{1/\beta}}{(\ln n)^{c/\beta}}$. 

Now we compute the $\beta$-norm of $\phi_i$. We have
\begin{align}
\mathbb{E} \phi_i^\beta & = \int_{0}^\infty \beta x^{\beta -1} \mathbb{P}(\phi_i \geq x) dx \\
& = \int_0^{x_0} \beta x^{\beta -1} dx + \int_{x_0}^\infty \beta x^{\beta -1} \frac{x_0^\beta (\ln x_0)^c}{x^\beta (\ln x)^c}dx \\
& <\infty. 
\end{align}

Hence, for $2\leq \beta < \infty$, Theorem~\ref{thm.pnorm} implies that $\mathbb{E}[\phi_T - \mu_T] \lesssim n^{1/\beta}$. At the same time, it follows from~\cite[Thm. 5.3.1.]{de2007extreme} that 
\begin{align}
\lim_{n\to \infty}\mathbb{E}\left| \frac{\phi_T}{a_n} \right| = \int x d\Phi_\beta(x) = \Gamma\left(1-\frac{1}{\beta}\right),
\end{align}
which shows that $\mathbb{E}\phi_T$ is of order $a_n$, which is at least $\frac{n^{1/\beta}}{(\ln n)^{c/\beta}}$. This shows that the bounds in Theorem~\ref{thm.pnorm} are essentially tight. 

\section{Discussions}\label{sec.discussions}

\subsection{``Soft'' generalizations of the ``hard'' results}

Theorem~\ref{thm.mainmgf} can be viewed as a soft generalization of the following well-known arguments~\cite{Boucheron--Lugosi--Massart2013} through replacing $\ln n$ with $I(T;\bm{\phi})$. Suppose we have $n$ random variables $Z_i$ such that $\mathbb{E}Z_i = 0$, and the moment generating function of $Z_i$ is upper bounded by $\psi_i(\lambda), \lambda \in [0,b)$. Assume that $\psi_i$ is convex, $\psi_i(0) = \psi_i'(0) = 0$. 
\begin{align}
e^{\lambda \mathbb{E}\max Z_i} & \leq \mathbb{E} e^{\lambda \max Z_i} = \mathbb{E} \max e^{\lambda Z_i}  \leq \sum_{i =1}^n \mathbb{E} e^{\lambda Z_i} \\
& \leq n e^{\max_i\psi(\lambda)}. 
\end{align}

Taking logarithm, we have
\begin{align}
\mathbb{E}\max Z_i & \leq \inf_{\lambda \in (0,b)} \left( \frac{\ln n + \max_i\psi(\lambda)}{\lambda} \right) \\
& = (\max_i\psi)^{*-1}(\ln n),
\end{align}
where in the last step we have used~\cite[Lemma 2.4]{Boucheron--Lugosi--Massart2013}. 

Analogously, Theorem~\ref{thm.pnorm} can be viewed as the generalization of the following argument. For $\beta \geq 1$,
\begin{align*}
(\mathbb{E}\max |Z_i|)^\beta & \leq \mathbb{E} \max |Z_i|^\beta  \leq \sum_{i = 1}^n \mathbb{E}|Z_i|^\beta \leq n \max_{i\leq n} \mathbb{E}|Z_i|^\beta.
\end{align*}
It follows that
\begin{align}
\mathbb{E}\max |Z_i| \leq n^{1/\beta} \cdot \max_{i\leq n} \| Z_i \|_\beta. 
\end{align}
However, we note that Theorem~\ref{thm.pnorm} is not a perfect generalization of this ``hard'' argument. For example, we are only able to bound the RHS of Theorem~\ref{thm.pnorm} uniformly over the distribution of $T$ when $\beta \geq 2$, but the ``hard'' argument presented above applies equally to all $\beta \geq 1$. 

More generally~\cite{pollardonline2016}, if $\psi$ is a nonnegative, convex, strictly increasing function on $\mathbb{R}_+$ that satisfies $\psi(0) = 0$, then, for each $\sigma>0$,
\begin{align}
\psi\left( \mathbb{E} \max_{i\leq n} \frac{|Z_i|}{\sigma} \right) & \leq \mathbb{E} \max_{i\leq n} \psi\left(\frac{|Z_i|}{\sigma} \right) \leq \sum_{i\leq n} \mathbb{E} \psi\left(\frac{|Z_i|}{\sigma} \right) \\
& \leq n \max_{i\leq n} \mathbb{E}\psi\left(\frac{|Z_i|}{\sigma} \right).
\end{align}
If $\sigma$ is such that $\mathbb{E}\psi(|Z_i|/\sigma)\leq 1$ for all $i\leq n$, then we have
\begin{align}
\mathbb{E}\max_{i\leq n}|Z_i| \leq \sigma \psi^{-1}(n). 
\end{align}

We note that the generalization of H\"{o}lder's inequality in Orlicz spaces could provide a ``soft'' generalization of the arguments above. For a general \emph{Orlicz} function $\psi: [0,\infty)\mapsto [0,\infty]$, i.e., a convex function vanishing at zero and is also not identically $0$ or $\infty$ on $(0,\infty)$, defining the Luxemburg norm of a random variable $X$ as 
\begin{align}
\| X \|_\psi = \inf \{ \sigma >0: \mathbb{E} \psi \left( \frac{|X|}{\sigma} \right) \leq 1\},
\end{align}
and the Ameniya norm of a random variable $X$ as 
\begin{align}
\| X \|_\psi^A = \inf \left \{ \frac{1 + \mathbb{E} \psi(|tX|)}{t}: t>0 \right \},
\end{align}
we have the generalized H\"{o}lder's inequality:
\begin{lemma}[Generalized H\"{o}lder's Inequality]\cite{hudzik2000amemiya}
Denote an Orlicz function by $\psi$ and its convex conjugate by $\psi^* = \sup\{uv - \psi(v): v\geq 0\}$. Then, 
\begin{align}
\mathbb{E}[XY] \leq \| X \|_\psi \| Y \|_{\psi^*}^A. 
\end{align}
\end{lemma}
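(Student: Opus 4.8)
The plan is to reduce the statement to the pointwise Fenchel--Young inequality $ab \le \psi(a) + \psi^*(b)$ for $a,b\ge 0$ --- exactly the device already used in the proofs of Theorem~\ref{thm.mainmgf} and Theorem~\ref{thm.pnorm} --- and then to optimize separately over the two free scaling parameters that define the Luxemburg norm $\|X\|_\psi$ and the Amemiya norm $\|Y\|_{\psi^*}^A$. Note first that $\psi^*$ is again a (lower semi-continuous, convex) Orlicz function by the Fenchel--Moreau theorem, so both norms make sense. I would dispose of the degenerate cases at the outset: if $\|X\|_\psi = \infty$ the right-hand side is infinite and there is nothing to prove, while $\|X\|_\psi = 0$ forces $X = 0$ a.s.\ (any Orlicz function as defined here is unbounded, so $\mathbb{E}\,\psi(|X|/\sigma)\le 1$ for every $\sigma>0$ is impossible unless $X=0$), and then both sides vanish. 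So one may assume $0 < \|X\|_\psi < \infty$.

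Now fix any $\sigma > \|X\|_\psi$, so that $\mathbb{E}\,\psi(|X|/\sigma) \le 1$ by definition of the Luxemburg norm, and fix any $t>0$. Applying Fenchel--Young pointwise with $a = |X|/\sigma$ and $b = t|Y|$ gives
\begin{align}
\frac{t}{\sigma}\,|XY| \;\le\; \psi\!\left(\frac{|X|}{\sigma}\right) + \psi^*(t|Y|) \qquad \text{a.s.}
\end{align}
Taking expectations and using $\mathbb{E}\,\psi(|X|/\sigma)\le 1$ yields
\begin{align}
\frac{t}{\sigma}\,\mathbb{E}|XY| \;\le\; 1 + \mathbb{E}\,\psi^*(t|Y|), \qquad\text{i.e.}\qquad \mathbb{E}|XY| \;\le\; \sigma\cdot\frac{1 + \mathbb{E}\,\psi^*(t|Y|)}{t}.
\end{align}
(If $\mathbb{E}\,\psi^*(t|Y|)=\infty$ this bound is vacuous but still valid.)

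It remains to optimize. Taking the infimum over $t>0$ on the right-hand side recognizes the Amemiya norm, giving $\mathbb{E}|XY| \le \sigma\,\|Y\|_{\psi^*}^A$; then letting $\sigma \downarrow \|X\|_\psi$ and using $\mathbb{E}[XY]\le \mathbb{E}|XY|$ produces the claimed inequality. The only genuinely delicate point is the bookkeeping around infinite expectations: one must make sure the argument is not circular when $\mathbb{E}|XY|$ or $\mathbb{E}\,\psi^*(t|Y|)$ fails to be finite. This is benign --- it suffices that the displayed bound hold for those $t$ at which the expectations are finite, and if no such $t$ exists then $\|Y\|_{\psi^*}^A = \infty$ and the inequality is trivial --- but it is the one place where care is needed. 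Beyond this, getting the scaling right (the $\sigma$ inside the argument of $\psi$, the $t$ multiplying $Y$ inside $\psi^*$, and the resulting cross factor $t/\sigma$ in front of $|XY|$) is what makes the two norms appear in precisely the asymmetric combination stated; no convexity is used beyond what is already packaged into Fenchel--Young.
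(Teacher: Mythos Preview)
Your proof is correct. Note, however, that the paper does not supply its own proof of this lemma --- it is quoted from \cite{hudzik2000amemiya} as a known result --- so there is no in-paper argument to compare against directly. That said, the Fenchel--Young-plus-optimize approach you take is precisely the mechanism the paper uses in its proof of Theorem~\ref{thm.orlicznorm}, where the same pointwise inequality $ab\le \psi(a)+\psi^*(b)$ is applied with $a=|\Delta_i|$ and $b=t|\tfrac{dP}{dQ}-1|$, followed by the bound $\mathbb{E}\psi(|\Delta_i|)\le 1$ and an infimum over $t$; so your argument is entirely in the spirit of the surrounding material.
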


The following theorem applies to random variables whose Luxemburg norms are bounded. 
\begin{theorem}\label{thm.orlicznorm}
Suppose $\phi_i - \mu_i$ has its Luxemburg norm upper bounded by $\sigma$. Then,
\begin{align}
|\mathbb{E}[\phi_T - \mu_T]| \leq \sigma \left \| \frac{dP_{T,\bm{\phi}}}{dP_T dP_{\bm{\phi}}} -1 \right \|_{\psi^*}^A,
\end{align}
where $\frac{dP_{T,\bm{\phi}}}{dP_T dP_{\bm{\phi}}}$ follows the product distribution $P_T P_{\bm{\phi}}$ in the Ameniya norm. 
\end{theorem}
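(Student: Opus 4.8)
The plan is to run the argument behind Theorem~\ref{thm.pnorm} essentially verbatim, replacing ordinary H\"older by the generalized H\"older inequality and replacing the $\phi_\alpha$-divergence bookkeeping by an Ameniya norm of the density ratio. First I would fix an index $i$ with $\mathbb{P}(T=i)>0$ and work on the space of $\bm{\phi}$ equipped with the marginal law $P_{\bm{\phi}}$. Writing $\Delta_i = \phi_i - \mu_i$, viewed as a function of $\bm{\phi}$, we have $\mathbb{E}_{P_{\bm{\phi}}}[\Delta_i]=0$, hence
\begin{align}
\mathbb{E}[\phi_i \mid T = i] - \mu_i = \mathbb{E}_{P_{\bm{\phi}}}\left[\Delta_i\left(\frac{dP_{\bm{\phi}\mid T = i}}{dP_{\bm{\phi}}} - 1\right)\right].
\end{align}
Taking absolute values, bounding by $\mathbb{E}_{P_{\bm{\phi}}}\bigl[\,|\Delta_i|\cdot|dP_{\bm{\phi}\mid T=i}/dP_{\bm{\phi}} - 1|\,\bigr]$, and applying the generalized H\"older inequality with $X = |\Delta_i|$ and $Y = |dP_{\bm{\phi}\mid T=i}/dP_{\bm{\phi}} - 1|$ gives
\begin{align}
\left|\mathbb{E}[\phi_i \mid T = i] - \mu_i\right| \le \|\Delta_i\|_{\psi}\,\left\|\frac{dP_{\bm{\phi}\mid T = i}}{dP_{\bm{\phi}}} - 1\right\|_{\psi^*}^{A},
\end{align}
both norms taken under $P_{\bm{\phi}}$. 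Since $\Delta_i$ depends on $\bm{\phi}$ only through its $i$th coordinate, $\|\Delta_i\|_{\psi}$ under $P_{\bm{\phi}}$ coincides with $\|\phi_i-\mu_i\|_{\psi}$ under $P_{\phi_i}$, which is at most $\sigma$ by hypothesis.

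Next I would take expectation over $T$ and use the triangle inequality $|\mathbb{E}[\phi_T - \mu_T]| \le \mathbb{E}_T|\mathbb{E}[\phi_T\mid T] - \mu_T|$ to obtain
\begin{align}
|\mathbb{E}[\phi_T - \mu_T]| \le \sigma\,\mathbb{E}_T\left\|\frac{dP_{\bm{\phi}\mid T}}{dP_{\bm{\phi}}} - 1\right\|_{\psi^*}^{A}.
\end{align}
Then I would expand the Ameniya norm as an infimum over $t>0$ and pull $\mathbb{E}_T$ inside via $\mathbb{E}_T\inf_t(\cdot)\le\inf_t\mathbb{E}_T(\cdot)$ --- the very step already used in the proof of Theorem~\ref{thm.pnorm} --- getting
\begin{align}
\mathbb{E}_T\left\|\frac{dP_{\bm{\phi}\mid T}}{dP_{\bm{\phi}}} - 1\right\|_{\psi^*}^{A} \le \inf_{t>0}\frac{1 + \mathbb{E}_T\,\mathbb{E}_{P_{\bm{\phi}}}\,\psi^*\!\left(t\left|\frac{dP_{\bm{\phi}\mid T}}{dP_{\bm{\phi}}} - 1\right|\right)}{t}.
\end{align}
Finally, using $dP_{\bm{\phi}\mid T}/dP_{\bm{\phi}} = dP_{T,\bm{\phi}}/(dP_T\,dP_{\bm{\phi}})$ and noting that the iterated expectation $\mathbb{E}_T\mathbb{E}_{P_{\bm{\phi}}}$ is integration against the product measure $P_T P_{\bm{\phi}}$, the right-hand side is precisely $\|dP_{T,\bm{\phi}}/(dP_T\,dP_{\bm{\phi}}) - 1\|_{\psi^*}^{A}$ evaluated under $P_T P_{\bm{\phi}}$, which is the asserted bound.

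I do not expect the chain of inequalities itself to be difficult; the delicate part is the bookkeeping at the ends. One must confirm that the Ameniya norm in the statement is the one relative to the product measure $P_T P_{\bm{\phi}}$ (as the statement indicates), keep the disintegration $P_{\bm{\phi}\mid T}$ well posed by discarding indices with $\mathbb{P}(T=i)=0$ so that $P_{\bm{\phi}\mid T=i}\ll P_{\bm{\phi}}$, verify the integrability of $\phi_i$ under $P_{\bm{\phi}\mid T=i}$ (which follows once the generalized H\"older bound is finite, and is trivial otherwise), and note that the generalized H\"older lemma applies with the possibly extended-real-valued $\psi$ and with $|\Delta_i|$ in place of a general $X$, legitimate because both the Luxemburg and Ameniya norms depend only on $|X|$. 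A conceptual point worth flagging is that, unlike the proof of Theorem~\ref{thm.pnorm}, this argument uses no data-processing step: working directly on $\bm{\phi}$-space avoids needing a data-processing inequality for the Ameniya-norm functional, at the acceptable cost of not obtaining the potentially sharper statement in terms of $P_{\phi_i\mid T=i}$ alone.
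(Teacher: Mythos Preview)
Your argument is correct and follows the same skeleton as the paper's proof---subtract the mean so that the density ratio enters as $\frac{dP}{dQ}-1$, apply a Young/H\"older-type inequality, then average over $T$ and swap the infimum with the expectation to reconstitute the Ameniya norm under the product measure. The two tactical differences you introduce are: (i) you invoke the generalized H\"older lemma as a black box, whereas the paper applies the pointwise Fenchel--Young inequality $uv\le\psi(u)+\psi^*(v)$ directly with $u=|\Delta_i|/\sigma$ and $v=t|\tfrac{dP}{dQ}-1|$; and (ii) the paper first works on the marginal space $(P_{\phi_i|T=i},P_{\phi_i})$ and then invokes data processing (convexity of $\psi^*(|\cdot|)$) to lift to $(P_{\bm{\phi}|T=i},P_{\bm{\phi}})$, while you start on the full $\bm{\phi}$-space from the outset. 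Your choice in (ii) is the cleaner one here, since it spares you from checking a data-processing property for the Ameniya-norm functional; the paper's route, on the other hand, mirrors the proof of Theorem~\ref{thm.pnorm} more closely and would in principle yield the slightly sharper intermediate bound in terms of $P_{\phi_i|T=i}$ before passing to the joint.
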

\begin{proof}
Denoting $P = P_{\phi_i | T = i}, Q = P_{\phi_i}, \Delta_i = (\phi_i - \mu_i)/\sigma, t>0$, we have
\begin{align}
\left| \mathbb{E}_P [\Delta_i] \right| & = \left| \mathbb{E}_Q \left[ \Delta_i \frac{dP}{dQ} \right] \right| \\
& =  \frac{1}{t}\left| \mathbb{E}_Q \left [ \Delta_i t\left( \frac{dP}{dQ}-1\right) \right] \right| \\
& \leq \frac{1}{t} \left( \mathbb{E}_Q \psi\left( |\Delta_i| \right) + \mathbb{E}_Q \psi^* \left( t \left| \frac{dP_{\phi_i|T = i}}{dP_{\phi_i}} -1 \right| \right) \right) \\
& \leq \frac{1}{t} \left( 1 + \mathbb{E}_{P_{\bm{\phi}}} \psi^* \left( t \left| \frac{dP_{\bm{\phi}|T = i}}{dP_{\bm{\phi}}} -1 \right| \right) \right),
\end{align}
where in the last step we have used the data processing property (i.e., the convexity of $\psi^*(|\cdot|)$). 

Taking expectations with respect to $T$ on both sides and taking infimum for $t$, we have
\begin{align}
|\mathbb{E}[\phi_T - \mu_T]| & \leq \sigma \inf \left \{ \frac{1 + \mathbb{E}_{P_T P_{\bm{\phi}}} \psi^* \left (t \left| \frac{dP_{T,\bm{\phi}}}{dP_T dP_{\bm{\phi}}}-1 \right | \right )}{t}: t>0 \right \} \\
& = \sigma \left \| \frac{dP_{T,\bm{\phi}}}{dP_T dP_{\bm{\phi}}} -1 \right \|_{\psi^*}^A,
\end{align}
where $\frac{dP_{T,\bm{\phi}}}{dP_T dP_{\bm{\phi}}}$ follows the product distribution $P_T P_{\bm{\phi}}$. 
\end{proof}

A natural question is: are there more natural ``soft'' generalizations of all the ``hard'' arguments above? 

\subsection{Connections with other generalizations of mutual information}

There exist various generalizations of mutual information in the literature, and we refer the interested readers to~\cite{ziv1973functionals,verdu2015alpha,lapidoth2016two} for references. The dependence measure $I_\alpha(X;Y)$ introduced in this paper seems to have received only scant attention in the existing literature. Some generalizations such as Sibson's mutual information~\cite{sibson1969information} involve minimizing over an auxiliary distribution $Q_Y$, and the dependence measure in~\cite{lapidoth2016two} involves minimizing jointly over $Q_XQ_Y$. Even when power functions are used to define $\phi$-divergences, functions $x^\alpha,\alpha\geq 1$ plus some affine terms were used much more frequently than $|x-1|^\alpha$ except for the case of $\alpha = 1$ (total variation distance) and $\alpha = 2$ ($\chi^2$-distance). For example, the usual definition of R\'enyi divergence involves the function $x^\alpha$ but not $|x-1|^\alpha$. It remains an interesting question whether other generalizations of mutual information could prove useful in bounding the exploration bias. 


\section{Acknowledgement}

We are grateful to James Zou for discussing with us the results in~\cite{Russo--Zou2015much}, which inspired this work. 

\appendix 


%
%

\section*{Proof of Lemma~\ref{lemma.boundalphamutual}}
We first prove a general result regarding the $\phi$-mutual information $I_\phi(X;Y)\triangleq D_\phi(P_{XY}\|P_XP_Y)$ for a general convex function $\phi$. 
\begin{lemma}\label{lem.phi_mutualinfo_upper}
Let $X$ take value in a finite set $\calX$, then for convex $\phi: \bR_{\ge 0}\mapsto \bR$,  
\begin{align}
I_\phi(X;Y)&\le \phi(0)\left(1-\sum_{x\in \calX}[P_X(x)]^2\right) \nonumber\\
&\qquad\qquad + \sum_{x\in\calX} [P_X(x)]^2\phi\left(\frac{1}{P_X(x)}\right).
\end{align}
If $\phi(t)$ is strictly convex at $t=1$, then the upper bound is tight iff $X$ is a deterministic function of $Y$.
\end{lemma}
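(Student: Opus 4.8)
The plan is to disintegrate the $\phi$-divergence over $Y$ and thereby reduce the statement to a one-shot convexity inequality on the probability simplex $\Delta(\calX)$. Since $X$ is discrete we may assume (after deleting symbols of zero probability) that $P_X$ has full support, so that $P_{XY}\ll P_XP_Y$ with $\frac{dP_{XY}}{d(P_XP_Y)}(x,y)=\frac{P_{X|Y}(x\mid y)}{P_X(x)}$. Writing $e_{x'}$ for the vertices of $\Delta(\calX)$ and setting
\begin{align}
g(q)\triangleq\sum_{x\in\calX}P_X(x)\,\phi\!\left(\frac{q_x}{P_X(x)}\right)=D_\phi(q\,\|\,P_X),\qquad q\in\Delta(\calX),
\end{align}
one has $I_\phi(X;Y)=\mathbb{E}_Y\big[g(P_{X|Y})\big]$, a legitimate identity because $\calX$ is finite and each argument $P_{X|Y}(x\mid y)/P_X(x)$ lies in the compact interval $[0,1/P_X(x)]$ on which the convex function $\phi$ is bounded. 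The function $g$ is convex on $\Delta(\calX)$, since $\phi$ is convex and $q\mapsto q_x/P_X(x)$ is affine.

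The core step is to majorize $g$ by an affine function of $q$. Writing $q=\sum_{x'}q_{x'}e_{x'}$ and applying Jensen's inequality to $\phi$ coordinatewise,
\begin{align}
\phi\!\left(\frac{q_x}{P_X(x)}\right)=\phi\!\left(\sum_{x'}q_{x'}\frac{[e_{x'}]_x}{P_X(x)}\right)\le\sum_{x'}q_{x'}\,\phi\!\left(\frac{[e_{x'}]_x}{P_X(x)}\right),
\end{align}
and summing against $P_X(x)$ gives $g(q)\le\sum_{x'}q_{x'}\,g(e_{x'})$. Substituting $q=P_{X|Y}$, taking $\mathbb{E}_Y$, and using the tower property $\mathbb{E}_Y[P_{X|Y}(x')]=P_X(x')$,
\begin{align}
I_\phi(X;Y)\le\sum_{x'\in\calX}\mathbb{E}_Y[P_{X|Y}(x')]\,g(e_{x'})=\sum_{x'\in\calX}P_X(x')\,g(e_{x'}).
\end{align}
Finally $g(e_{x'})=P_X(x')\,\phi\!\left(1/P_X(x')\right)+\big(1-P_X(x')\big)\phi(0)$, and substituting this in, together with the identity $\sum_{x'}P_X(x')\big(1-P_X(x')\big)=1-\sum_{x'}[P_X(x')]^2$, rearranges to the claimed right-hand side.

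For the equality characterization, the only slack in the chain is the coordinatewise Jensen bound, integrated over $Y$, so equality forces $g(P_{X|Y=y})=\sum_{x'}P_{X|Y=y}(x')\,g(e_{x'})$ for $P_Y$-a.e.\ $y$. Fix such a $y$ and a symbol $x$ in the support of $P_X$, and write $q_x=P_{X|Y=y}(x)$. Equality in the Jensen bound for coordinate $x$ either holds because $q_x\in\{0,1\}$, or -- since then $q_x/P_X(x)$ is an interior point of $[0,1/P_X(x)]$ -- forces the convex function $\phi$ to coincide with its chord there, hence to be affine on all of $[0,1/P_X(x)]$. As $P_X(x)\le 1$, this interval contains $t=1$; if $P_X$ is not a point mass, so that $P_X(x)<1$ and $t=1$ lies in the interior, affineness contradicts strict convexity of $\phi$ at $t=1$. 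Hence every coordinate of $P_{X|Y=y}$ lies in $\{0,1\}$, i.e.\ $P_{X|Y=y}$ is a vertex of $\Delta(\calX)$, so $X$ is almost surely a deterministic function of $Y$. Conversely, if $X=h(Y)$ then each $P_{X|Y=y}=e_{h(y)}$ is a vertex, every Jensen step above holds with equality, and the bound is attained; when $P_X$ is a point mass, $X$ is constant and both sides equal $0$.

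I expect the equality analysis to be the delicate part: one must translate ``$\phi$ strictly convex at $t=1$'' into ``$\phi$ is affine on no interval whose interior contains $t=1$'' and handle the boundary case $1/P_X(x)=1$ (equivalently, $X$ almost surely constant). The inequality itself is just two applications of Jensen plus arithmetic, with no measure-theoretic subtleties since $\calX$ is finite.
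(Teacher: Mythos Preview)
Your proof is correct and follows essentially the same approach as the paper. Your ``coordinatewise Jensen'' step, once the vertex coordinates $[e_{x'}]_x\in\{0,1\}$ are substituted, is exactly the two-point chord bound $\phi\big(q_x/P_X(x)\big)\le(1-q_x)\phi(0)+q_x\,\phi\big(1/P_X(x)\big)$ that the paper writes directly; the only packaging difference is that you disintegrate over $Y$ and bound $g$ on the simplex, whereas the paper invokes the Gel'fand--Yaglom--Perez reduction to discrete $Y$ and sums the chord bound over $(x,y)$. Your equality analysis is also the same in substance, just spelled out in more detail than the paper's one-line version.
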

\begin{proof}[Proof of Lemma \ref{lem.phi_mutualinfo_upper}]
It follows from the generalization of the Gel'fand-Yaglom-Peres theorem for $\phi$-divergences~\cite{Gilardoni2009gel} that it suffices to consider $Y$ being a discrete random variable. Note that
\begin{align}
0 \le \frac{P_{XY}(x,y)}{P_X(x)P_Y(y)} \le \frac{1}{P_X(x)}
\end{align}
then by the convexity of $\phi$ we know that
\begin{align}
\phi\left(\frac{P_{XY}(x,y)}{P_X(x)P_Y(y)}\right)&\le \left(1-\frac{P_{XY}(x,y)}{P_Y(y)}\right)\phi(0) \nonumber\\
&\qquad +\frac{P_{XY}(x,y)}{P_Y(y)} \phi\left(\frac{1}{P_X(x)}\right).
\end{align}
Now summing over $x,y$ in the definition of $I_\phi(X;Y)$ yields the desired result. When $\phi(t)$ is strictly convex at $t=1$ and the equality holds, we have $P_{XY}(x,y)\in \{0,P_Y(y)\}$ for any $x,y$, which means that $X$ is a function of $Y$.
\end{proof}

Now the first statement in Lemma \ref{lemma.boundalphamutual} follows from Lemma \ref{lem.phi_mutualinfo_upper} applied to $\phi_\alpha(t)=|t-1|^\alpha$. For the second statement, for $\alpha\in[1,2]$ we define $f(t)=t^2[(\frac{1}{t}-1)^\alpha-1]=t^{2-\alpha}(1-t)^\alpha-t^2$ on $[0,1]$. We note that $f(t)$ is \emph{not} concave on $[0,1]$: 
\begin{align}
f''(t) &= -(2-\alpha)(\alpha-1)\left(\frac{1}{t}-1\right)^\alpha - 2(2-\alpha)\alpha \left(\frac{1}{t}-1\right)^{\alpha-1}\nonumber\\
&\qquad + \alpha(\alpha-1)\left(\frac{1}{t}-1\right)^{\alpha-2} - 2
\end{align}
satisfies $f''(1)=+\infty$ for $\alpha\in[1,2)$. However, it is straightforward to see that $f''(t)$ is increasing on $[0,1]$, and
\begin{align}
f''(|\calX|^{-1}) &\le \alpha(\alpha-1)(|\calX|-1)^{\alpha-2}-2 \\
&\le \alpha(\alpha-1)-2\le 0.
\end{align}
Hence, if we define 
\begin{align}
g(t) &\triangleq f(t) - \left[\frac{2(|\calX|-1)^\alpha-2}{|\calX|}-\alpha (|\calX|-1)^{\alpha-1}\right](t-\frac{1}{|\calX|}) \nonumber\\
&\qquad - \frac{(|\calX|-1)^\alpha-1}{|\calX|^2},
\end{align}
it is straightforward to see that $g(|\calX|^{-1})=g'(|\calX|^{-1})=0$ and $g''(t)=f''(t)$ on $[0,1]$. In particular, $g''(|\calX|^{-1})\le 0$ and $g''(t)$ is increasing on $[0,1]$. As a result, the maximum of $g(t)$ over $t\in[0,1]$ is attained at $t=|\calX|^{-1}$ or $t=1$, and
\begin{align}
g(1) &= -1-\left[\frac{2(|\calX|-1)^\alpha-2}{|\calX|}-\alpha (|\calX|-1)^{\alpha-1}\right](1-\frac{1}{|\calX|}) \nonumber\\
&\qquad- \frac{(|\calX|-1)^\alpha-1}{|\calX|^2}\\
&= \frac{(\alpha-1)(|\calX|-1)^\alpha - (2-\alpha)(|\calX|-1)^{\alpha+1}-(|\calX|-1)^2}{|\calX|^2}\\
&= \frac{(|\calX|-1)^\alpha}{|\calX|^2}\left[\alpha-1-(2-\alpha)(|\calX|-1)-(|\calX|-1)^{2-\alpha}\right].
\end{align}
Obviously $g(1)=0$ if $|\calX|=1$, and if $|\calX|\ge 2$, we have
\begin{align}
g(1) &\le \frac{(|\calX|-1)^\alpha}{|\calX|^2}\left[\alpha-1-(2-\alpha)\cdot 1-1\right] \\
&= \frac{2(\alpha-2)(|\calX|-1)^\alpha}{|\calX|^2}\le 0.
\end{align}

In summary, we have $g(t) \le \max\{g(|\calX|^{-1}),g(1)\}=0$ for any $t\in[0,1]$, and thus
\begin{align}
& f(P_X(x))\le \frac{(|\calX|-1)^\alpha-1}{|\calX|^2}+\nonumber\\
&\left[\frac{2(|\calX|-1)^\alpha-2}{|\calX|}-\alpha (|\calX|-1)^{\alpha-1}\right](P_X(x)-\frac{1}{|\calX|}) .
\end{align}
Now summing over $x\in\calX$ completes the proof.

\bibliographystyle{IEEEtran}
\bibliography{di}

\newcommand{\noopsort}[1]{}
\begin{thebibliography}{10}
\providecommand{\url}[1]{#1}
\csname url@samestyle\endcsname
\providecommand{\newblock}{\relax}
\providecommand{\bibinfo}[2]{#2}
\providecommand{\BIBentrySTDinterwordspacing}{\spaceskip=0pt\relax}
\providecommand{\BIBentryALTinterwordstretchfactor}{4}
\providecommand{\BIBentryALTinterwordspacing}{\spaceskip=\fontdimen2\font plus
\BIBentryALTinterwordstretchfactor\fontdimen3\font minus
  \fontdimen4\font\relax}
\providecommand{\BIBforeignlanguage}[2]{{%
\expandafter\ifx\csname l@#1\endcsname\relax
\typeout{** WARNING: IEEEtran.bst: No hyphenation pattern has been}%
\typeout{** loaded for the language `#1'. Using the pattern for}%
\typeout{** the default language instead.}%
\else
\language=\csname l@#1\endcsname
\fi
#2}}
\providecommand{\BIBdecl}{\relax}
\BIBdecl

\bibitem{Russo--Zou2015much}
D.~Russo and J.~Zou, ``How much does your data exploration overfit? controlling
  bias via information usage,'' \emph{arXiv preprint arXiv:1511.05219}, 2015.

\bibitem{Csisz1967}
I.~Csisz{\'a}r \emph{et~al.}, ``Information-type measures of difference of
  probability distributions and indirect observations,'' \emph{Studia Sci.
  Math. Hungar.}, vol.~2, pp. 299--318, 1967.

\bibitem{Ali--Silvey1966}
S.~Ali and S.~D. Silvey, ``A general class of coefficients of divergence of one
  distribution from another,'' \emph{Journal of the Royal Statistical Society.
  Series B (Methodological)}, pp. 131--142, 1966.

\bibitem{Boucheron--Lugosi--Massart2013}
S.~Boucheron, G.~Lugosi, and P.~Massart, \emph{Concentration inequalities: A
  nonasymptotic theory of independence}.\hskip 1em plus 0.5em minus 0.4em\relax
  Oxford University Press, 2013.

\bibitem{jiao2014information}
J.~Jiao, T.~A. Courtade, A.~No, K.~Venkat, and T.~Weissman, ``Information
  measures: the curious case of the binary alphabet,'' \emph{IEEE Transactions
  on Information Theory}, vol.~60, no.~12, pp. 7616--7626, 2014.

\bibitem{Atar2015robust}
R.~Atar, K.~Chowdhary, and P.~Dupuis, ``Robust bounds on risk-sensitive
  functionals via {R}{\'e}nyi divergence,'' \emph{SIAM/ASA Journal on
  Uncertainty Quantification}, vol.~3, no.~1, pp. 18--33, 2015.

\bibitem{courtade2014cumulant}
T.~A. Courtade and S.~Verd{\'u}, ``Cumulant generating function of codeword
  lengths in optimal lossless compression,'' in \emph{2014 IEEE International
  Symposium on Information Theory}.\hskip 1em plus 0.5em minus 0.4em\relax
  IEEE, 2014, pp. 2494--2498.

\bibitem{nguyen2010estimating}
X.~Nguyen, M.~J. Wainwright, and M.~I. Jordan, ``Estimating divergence
  functionals and the likelihood ratio by convex risk minimization,''
  \emph{IEEE Transactions on Information Theory}, vol.~56, no.~11, pp.
  5847--5861, 2010.

\bibitem{ruderman2012tighter}
A.~Ruderman, M.~Reid, D.~Garc{\'\i}a-Garc{\'\i}a, and J.~Petterson, ``Tighter
  variational representations of $f$-divergences via restriction to probability
  measures,'' \emph{arXiv preprint arXiv:1206.4664}, 2012.

\bibitem{de2007extreme}
L.~De~Haan and A.~Ferreira, \emph{Extreme value theory: an introduction}.\hskip
  1em plus 0.5em minus 0.4em\relax Springer Science \& Business Media, 2007.

\bibitem{pollardonline2016}
\BIBentryALTinterwordspacing
D.~Pollard. (2005) Asymptopia. [Online]. Available:
  \url{http://www.stat.yale.edu/~pollard/Courses/607.spring05/handouts/finite-max.pdf}
\BIBentrySTDinterwordspacing

\bibitem{hudzik2000amemiya}
H.~Hudzik and L.~Maligranda, ``{A}memiya norm equals {O}rlicz norm in
  general,'' \emph{Indagationes Mathematicae}, vol.~11, no.~4, pp. 573--585,
  2000.

\bibitem{ziv1973functionals}
J.~Ziv and M.~Zakai, ``On functionals satisfying a data-processing theorem,''
  \emph{IEEE Transactions on Information Theory}, vol.~19, no.~3, pp. 275--283,
  1973.

\bibitem{verdu2015alpha}
S.~Verd{\'u}, ``$\alpha$-mutual information,'' in \emph{Information Theory and
  Applications Workshop (ITA), 2015}.\hskip 1em plus 0.5em minus 0.4em\relax
  IEEE, 2015, pp. 1--6.

\bibitem{lapidoth2016two}
A.~Lapidoth and C.~Pfister, ``Two measures of dependence,'' \emph{arXiv
  preprint arXiv:1607.02330}, 2016.

\bibitem{sibson1969information}
R.~Sibson, ``Information radius,'' \emph{Zeitschrift f{\"u}r
  Wahrscheinlichkeitstheorie und verwandte Gebiete}, vol.~14, no.~2, pp.
  149--160, 1969.

\bibitem{Gilardoni2009gel}
G.~L. Gilardoni, ``On a {G}el'fand-{Y}aglom-{P}eres theorem for
  $f$-divergences,'' \emph{arXiv preprint arXiv:0911.1934}, 2009.

\end{thebibliography}
\end{document}